\numberwithin{equation}{section}
\newcommand{\innerp}[1]{\langle {#1} \rangle}
\newcommand{\norm}[1]{\|{#1}\|_2}
\newcommand{\normf}[1]{\|{#1}\|_F}
\newcommand{\abs}[1]{\lvert#1\rvert}
\newcommand{\floor}[1]{\left\lfloor #1 \right\rfloor}
\newcommand{\A}{{\mathcal A}}
\newcommand{\M}{{\mathbf M}}
\newcommand{\R}{{\mathbb R}}
\newcommand{\Rd}{{\mathbb R}^d}
\newcommand{\T}{\top}
\newcommand{\C}{{\mathbb C}}
\newcommand{\Cd}{{\mathbb C}^d}
\newcommand{\x}{{\tilde{\mathbf{x}}}}
\newcommand{\y}{{\tilde{\mathbf{y}}}}
\newcommand{\Real}{{\mathfrak{R}}}
\newcommand{\uu}{{\tilde{\mathbf{u}}}}
\newcommand{\uv}{{\tilde{\mathbf{v}}}}
\newcommand{\vx}{{\mathbf x}}
\newcommand{\vy}{{\mathbf y}}
\newcommand{\vu}{{\mathbf u}}
\newcommand{\vv}{{\mathbf v}}
\newcommand{\vz}{{\mathbf z}}
\newcommand{\vb}{{\mathbf b}}
\newcommand{\ve}{{\mathbf e}}
\newcommand{\va}{{\mathbf a}}
\newcommand{\F}{{\mathbb F}}
\newcommand{\Z}{{\mathbb Z}}
\newcommand{\rank}{{\rm rank}}
\newcommand{\tr}{{\rm tr}}
\newcommand{\spann}{{\rm span}}
\newcommand{\dimm}{{\rm dim}}
\newtheorem{definition}{Definition}[section]
\newtheorem{corollary}[definition]{Corollary}
\newtheorem{theorem}[definition]{Theorem}
\newtheorem{lemma}[definition]{Lemma}
\newtheorem{remark}[definition]{Remark}
\date{}
\begin{document}
\baselineskip 18pt
\bibliographystyle{plain}
\title{
Phase retrieval  from the norms of affine  transformations 
}

\author{Meng Huang}
\address{LSEC, Inst.~Comp.~Math., Academy of
Mathematics and System Science,  Chinese Academy of Sciences, Beijing, 100091, China}
\email{hm@lsec.cc.ac.cn}

\author{Zhiqiang Xu}
\thanks{Zhiqiang Xu was supported  by NSFC grant (11422113, 91630203, 11331012) and by National Basic Research Program of China (973 Program 2015CB856000)}
\address{LSEC, Inst.~Comp.~Math., Academy of
Mathematics and System Science,  Chinese Academy of Sciences, Beijing, 100091, China
\newline
School of Mathematical Sciences, University of Chinese Academy of Sciences, Beijing 100049, China}
\email{xuzq@lsec.cc.ac.cn}

\begin{abstract}
In this paper, we consider the generalized phase retrieval from affine measurements. This problem aims to recover  signals $\vx \in \F^d$ from the magnitude of the affine transformations  $y_j=\norm{M_j^*\vx +\vb_j}^2,\; j=1,\ldots,m,$ where $M_j \in \F^{d\times r}, \vb_j\in \F^{r}, \F\in \{\R,\C\}$ and we call it as  {\em generalized affine phase retrieval}. We develop a framework
for generalized affine phase retrieval  with presenting  necessary and sufficient conditions for
$\{(M_j,\vb_j)\}_{j=1}^m$ having generalized affine phase retrieval property.
We also establish results on minimal measurement number for generalized affine phase retrieval.
 Particularly, we show if $\{(M_j,\vb_j)\}_{j=1}^m \subset \F^{d\times r}\times \F^{r}$
 has generalized affine phase retrieval property, then $m\geq d+\floor{d/r}$ for $\F=\R$
 ($m\geq 2d+\floor{d/r}$ for $\F=\C$ ). We also show that the bound is tight provided $r\mid d$.
  These results imply that one can reduce the measurement number by raising $r$, i.e. the rank of $M_j$. This highlights a notable difference between  generalized affine phase retrieval and generalized phase retrieval.
 Furthermore, using tools of algebraic geometry, we show that $m\ge 2d$ (resp. $m\ge 4d-1$) generic measurements $\A=\{(M_j,b_j)\}_{j=1}^m$ have the generalized phase retrieval property for $\F=\R$ (resp. $\F=\C$).
\end{abstract}
\maketitle
\section{Introduction}
\subsection{Phase retrieval}
 Phase retrieval aims to recover a signal $\vx\in \F^d$ from the measurements $\abs{\innerp{\mathbf{a}_j,\vx}},\;j=1,\ldots,m $, where $\F=\R$ or $\mathbb{C}$ and $\mathbf{a}_j\in \F^d$ are the measurement vectors. Phase retrieval  is raised in many areas  such as X-ray crystallography \cite{harrison1993phase,millane1990phase}, microscopy \cite{miao2008extending}, astronomy \cite{fienup1987phase}, coherent diffractive imaging \cite{shechtman2015phase,gerchberg1972practical} and optics \cite{walther1963question}.
To state conveniently, set $A:=(\mathbf{a}_1,\ldots,\mathbf{a}_m)$ and $\M_A(\vx):=(\abs{\innerp{\mathbf{a}_1,\vx}},\ldots,\abs{\innerp{\mathbf{a}_m,\vx}})\in \R^m$. Noting that for any $c\in \F$ with $\abs{c}=1$  we have $\M_A(\vx)=\M_A(c\vx)$ and hence we can only hope to recover $\vx$ up to a unimodular constant.
If $\M_A(\vx)=\M_A(\vy)$ implies $\vx\in \{c\vy:c\in \F, \abs{c}=1\}$, we say $A$ has {\em phase retrieval property for $\F^d$}. A fundamental  problem in phase retrieval is to give the minimal $m$ for which there exists $A=(\mathbf{a}_1,\ldots, \mathbf{a}_m)^\T\in \F^{m\times d}$ which has phase retrieval property for $\F^d$.  For the case $\F=\R $, it is well known that the minimal measurement number  $m$ is  $2d-1$ \cite{balan2006signal}.  For the complex case $\F=\mathbb{C}$, this question remains open. Conca, Edidin, Hering and Vinzant \cite{conca2015algebraic}  proved $m\ge 4d-4 $ generic measurement vectors $A=(\mathbf{a}_1,\ldots,\mathbf{a}_m)^\T\in \C^{m\times d}$ have phase retrieval property for $\C^d$ and they furthermore show that $4d-4$  is sharp  if $d$ is in the form of $2^k+1,\;k\in \mathbb{Z}_+$. In \cite{vinzant2015small}, for the case where $\F=\C$ and $d=4$, Vinzant  present $11=4d-5<4d-4$ measurement vectors which has phase retrieval property for $\mathbb{C}^4$ which  implies that $4d-4$ is not sharp for some dimension $d$. Beyond the minimal measurement number problem, one also developed many efficient algorithms for recovering $\vx$ from $\M_A(\vx)$ (see \cite{phase1,phase2,phase3}).
\subsection{Generalized phase retrieval and affine phase retrieval}
A generalized version of phase retrieval, which is called  as {\em generalized phase retrieval}, was introduced by Wang and Xu \cite{wang2016generalized}.
 In the generalized phase retrieval, one aims to reconstruct $\vx\in \F^d$ through quadratic samples $\vx^*A_1\vx,\ldots,\vx^*A_m\vx$ where $A_j\in \F^{d\times d}$ are Hermitian matrix  for $\F=\C$ (symmetric matrix for $\F=\R$). Set $\A:=(A_j)_{j=1}^m$ and $\M_\A(\vx):=(\vx^*A_1\vx,\ldots,\vx^*A_m\vx)$. We say $\A$ has {\em generalized phase retrieval property} if $\M_\A(\vx)=\M_\A(\vy)$ implies that $\vx\in \{c\vy:c\in \F,\abs{c}=1\}$.
 In \cite{wang2016generalized}, Wang and Xu show the fantastic connection among phase retrieval, nonsingular bilinear form and embedding. They also study the minimal $m$ for which there exists $\A=(A_j)_{j=1}^m$ which has generalized phase retrieval property. Particularly, they show that for the case $\F=\C$, the measurement number $m\geq 4d-2-2\alpha$ where $\alpha$
denotes the number of 1's in the binary expansion of $d-1$.
If take $A_j=\va_j\va_j^*$, then the generalized phase retrieval is reduced to the standard phase retrieval. Furthermore, if we require $A_j, j=1,\ldots,m,$ are orthogonal projection matrices,  the generalized phase retrieval is reduced to phase retrieval by projection \cite{projphase,planephsae}. Hence, the generalized phase retrieval  includes the standard phase retrieval as well as the phase retrieval by projection as a special case.
Both  standard phase retrieval and generalized phase retrieval require the measurement number is greater than or equal to $4d-2-2\alpha$. Hence, one can not reduce the minimal measurement number heavily  by rasing the rank of $A_j$.

Affine phase retrieval is raised in holography \cite{liebling2003local} as well as in phase retrieval with background information \cite{YW} which aims to recover $\vx\in \F^d$ from $\abs{\innerp{\mathbf{a}_j,\vx}+b_j},\;j=1,\ldots,m,$ where $\mathbf{a}_j\in \F^d$ and $b_j\in \F$.  The authors of \cite{gao2016phase} develop the general framework of affine phase retrieval with highlighting the difference between affine phase retrieval and standard phase retrieval.
Unlike the standard phase retrieval where we can only recover $\vx$ up to a unimodular constant,  it is possible to recover $\vx$ exactly in affine phase retrieval.
 Particularly, for the case where $\F=\C$, the authors of \cite{gao2016phase} show that there exist $m=3d$ measurements $\{(\mathbf{a}_j,b_j)\}_{j=1}^m$ so that  one can recover $\vx$ from $\abs{\innerp{\mathbf{a}_j,\vx}+b_j}, j=1,\ldots,m$. They furthermore show the measurement number $3d$ is sharp for recovering $\vx\in \C^d$ from $\abs{\innerp{\mathbf{a}_j,\vx}+b_j}, j=1,\ldots,m$.  Similarly, for the case where $\F=\R$, if was shown in \cite{gao2016phase} that $m=2d$ measurements are sufficient and necessary for recovering $\vx$ from $\abs{\innerp{\mathbf{a}_j,\vx}+b_j}, j=1,\ldots,m$.
\subsection{Generalized affine phase retrieval}
In this paper, we consider the recovery of $\vx\in \F^d$ from the affine quadratic measurements
\[
y_j=\norm{M_j^* \vx +\vb_j}^2,\quad j=1,\ldots,m ,
\]
where $M_j \in \F^{d\times r}$ and $\vb_j\in \F^{r}$.  Set $\A=\{(M_j,\vb_j)\}_{j=1}^m \subset \F^{d\times r}\times \F^{r}$, we can view $\A$ as a point in $\F^{m(d\times r)}\times \F^{mr}$.  Define the map $\M_{\A}:\F^d\rightarrow \R^m$ by
\begin{equation}\label{map}
  \M_{\A}(\vx)=(\norm{M_1^* \vx+\vb_1}^2,\ldots,\norm{M_m^* \vx+\vb_m}^2).
\end{equation}
Our aim  is to study whether a signal $\vx\in \F^d$ can be uniquely reconstructed from $\M_{\A}(\vx)$.
 To state conveniently, we introduce the definition of {\em the generalized affine phase retrieval property}.
\begin{definition}
Let $r\in \Z_{\geq 1}$ and $\A=\{(M_j,\vb_j)\}_{j=1}^m \subset \F^{d\times r}\times \F^{r}$. We say $\A$ has the generalized affine phase retrieval property if  $\M_{\A}$ is injective on $\F^d$.
\end{definition}
We next introduce the connection between generalized affine phase retrieval and generalized phase retrieval. Note that
\begin{equation}\label{eq:con}
y_j=\norm{M_j^*\vx +\vb_j}^2=\x^* A_j \x, \; j=1,\ldots,m ,
\end{equation}
where
\[ \x=\left(\begin{array}{c}
              \vx \\
              1
            \end{array}
\right) \qquad \text{and} \qquad A_j=\left(
                                     \begin{array}{cc}
                                       M_jM_j^* & M_j\vb_j \\
                                      (M_j\vb_j)^*  & \vb_j^*\vb_j \\
                                     \end{array}
                                   \right).
\]
The (\ref{eq:con}) shows that generalized affine phase retrieval can be reduced  to  recover  $\x\in \F^{d+1}$ from $\x^*A_j\x, j=1,\ldots,m$.
Since we already know the last entry of $\x$ is $1$, we can recover $\x$ from $\x^*A_j\x, j=1,\ldots,m$ exactly.  Hence,  the generalized affine phase retrieval can be considered as the extension of both the generalized phase retrieval and the affine phase retrieval.

\subsection{Continuous map}

Note that $\vx\in \R^d$ has $d$ real variables ($2d$ real variables for the complex case). Naturally, one may be interested in whether it is possible to recover $\vx\in \R^d$ from $d$ nonnegative  measurements ($2d$ nonnegative measurements for $\F=\C$).
We state the question as follows. For $j=1,\ldots,m$, suppose that $f_j: \F^d\rightarrow \R_+$ is a continuous nonnegative function, i.e. $f_j(\vx)\geq 0$. For $\vx\in \F^d$, set
\begin{equation}\label{eq:Fdef}
{\mathbf F}(\vx):=(f_1(\vx),\ldots,f_m(\vx))\in \R_+^m.
\end{equation}
One may be interested in the question:{\em
What is the smallest $m$ so that ${\mathbf F}$ is injective on $\R^d$?} Under some mild conditions for ${\mathbf F}$, we show that $m\geq d+1$ is necessary for ${\mathbf F}$ being injective on $\Rd$ ($m\geq 2d+1$ for $\C^d$). As we will show later, there exists $\{(A_j,\vb_j)\}_{j=1}^m\subset \R^{d\times d}\times \R^d$ with $m=d+1$ so that
$ \M_{\A}$ is injective on $\Rd$. This implies that the generalized affine phase retrieval can achieve the lower bound $m=d+1$. A similar conclusion also holds for the complex case.

\subsection{Our contribution}
In this paper, we  develop the framework of the  generalized affine phase retrieval.
 Particularly, we focus  on the number of measurements needed to
achieve generalized affine phase retrieval.
 We first present some equivalent conditions  and then study the minimal measurement number to guarantee the generalized affine phase retrieval property for both real and complex signals. For $\F=\R$, we show that $ m\ge d+\floor{\frac{d}{r}}$  ($m\ge 2d+\floor{\frac{d}{r}}$ for $\F=\C$) is necessary for there existing measurements $\{(M_j,\vb_j)\}_{j=1}^m\subset \F^{d\times r}\times \F^r$ which have this property. We also show that the bound is tight provided $d/r\in \Z$.
Compared with the generalized  phase retrieval, the generalized affine phase retrieval can reduce the measurement number heavily by rasing the rank of $M_j$.  This also highlights a notable difference between the generalized affine phase retrieval and generalized phase retrieval.

 Using the tools developed in \cite{balan2006signal, conca2015algebraic, wang2016generalized},
 we show that $m\ge 2d$  generic measurements $\{(M_1,\vb_1),\ldots,(M_m,\vb_m)\}\in \F^{m(d\times r)\times mr}$ for $\F=\R$ ($m\ge 4d-1$ for $\F=\C$) can do generalized affine phase retrieval for $\F^d$.

\section{The minimal measurement number for Continuous map}
Recall that ${\mathbf F}: \F^d\rightarrow \R_+^m$ is a continuous map.
The next theorem shows that the necessary condition for ${\mathbf F}$ being injective is $m\geq d+1$ under some mild condition for ${\mathbf F}(\vx)$.
\begin{theorem}\label{th:d1}
Suppose that ${\mathbf F}:\F^d\rightarrow \R_+^m$ is a continuous map which satisfies
\begin{equation}\label{eq:wuqiong}
\lim_{R\rightarrow +\infty}{\rm inf}_{\|\vx\|\geq R} \|{\mathbf F}(\vx)\|=+\infty.
\end{equation}
Then if $m=d$ and $\F=\R$ ($m=2d$ for $\F=\C$), then ${\mathbf F}$ is not injective on $\F^d$.
\end{theorem}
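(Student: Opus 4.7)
The plan is to exploit the topological mismatch between $\F^d$ (viewed as $\R^n$ with $n=d$ or $n=2d$) and the nonnegative orthant $\R_+^n$. Viewing $\C^d$ as $\R^{2d}$, both cases reduce to showing that there is no continuous injection $\mathbf{F}:\R^n\to \R_+^n$ satisfying the coercivity condition \eqref{eq:wuqiong}. The key observation is that \eqref{eq:wuqiong} is precisely the statement that $\mathbf{F}$ is a \emph{proper} map: for every $M>0$ there exists $R>0$ so that $\|\vx\|\geq R$ implies $\|\mathbf{F}(\vx)\|\geq M$, equivalently, preimages of compact sets are compact.

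The first step will be to extend $\mathbf{F}$ to a continuous map $\widehat{\mathbf{F}}:S^n\to S^n$ between the one-point compactifications of the source $\R^n$ and the target $\R^n$ (the latter containing $\R_+^n$), by setting $\widehat{\mathbf{F}}(\infty)=\infty$. Continuity at $\infty$ is immediate from \eqref{eq:wuqiong}: a basic neighborhood of $\infty$ in the target has the form $\{\vy:\|\vy\|>M\}\cup\{\infty\}$, and its preimage contains $\{\vx:\|\vx\|>R\}\cup\{\infty\}$ for $R$ chosen from \eqref{eq:wuqiong}. Properness also ensures that if $\mathbf{F}$ is injective on $\R^n$, then $\widehat{\mathbf{F}}$ is injective on $S^n$.

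The second step is the topological dimension argument. Invariance of domain tells us that a continuous injection from an open subset of $\R^n$ into $\R^n$ is an open map. Applied to $\widehat{\mathbf{F}}$ restricted to charts around points of $S^n$, this forces $\widehat{\mathbf{F}}(S^n)$ to be open in $S^n$. On the other hand, $\widehat{\mathbf{F}}(S^n)$ is compact, hence closed. Since $S^n$ is connected and nonempty, the image must be all of $S^n$, so $\widehat{\mathbf{F}}$ is surjective.

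The final step is to derive the contradiction from the sign constraint. Since $\mathbf{F}(\R^n)\subseteq \R_+^n$, the image of $\widehat{\mathbf{F}}$ lies in $\R_+^n\cup\{\infty\}$, which is a proper subset of $S^n$ (for example, it misses the point $(-1,\ldots,-1)\in\R^n$). This contradicts the surjectivity established in the previous step, so $\mathbf{F}$ cannot be injective. The main obstacle is not any single calculation but the topological packaging; in particular one must verify carefully that the extension $\widehat{\mathbf{F}}$ is well-defined and continuous at $\infty$, and invoke invariance of domain correctly. No algebraic structure of $\mathbf{F}$ beyond continuity, properness, and nonnegativity is used.
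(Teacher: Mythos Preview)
Your argument is correct, but it takes a genuinely different route from the paper's. Both proofs begin the same way: use the coercivity condition \eqref{eq:wuqiong} to extend $\mathbf{F}$ continuously to a map $S^n\to S^n$ between one-point compactifications, and observe that the image lies in the proper subset $\R_+^n\cup\{\infty\}$. From there the arguments diverge. You assume injectivity and invoke invariance of domain to conclude that the extended map is open, hence its image is a nonempty clopen subset of the connected space $S^n$ and must be all of $S^n$, contradicting the containment in $\R_+^n\cup\{\infty\}$. The paper instead notes that since the image misses a point of $S^n$, the extension factors through $\R^n$, and then applies the Borsuk--Ulam theorem to obtain antipodal points $\vx,-\vx\in S^n$ with the same image; pulling these back yields two distinct finite points with the same $\mathbf{F}$-value.

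The Borsuk--Ulam approach has the mild advantage of producing the pair of colliding points directly, without arguing by contradiction from injectivity. Your invariance-of-domain approach is cleaner as a pure non-injectivity statement and makes transparent the general principle at work: a continuous proper map $\R^n\to\R^n$ that is not surjective cannot be injective. Either tool (Borsuk--Ulam or invariance of domain) is of comparable depth, so neither proof is materially more elementary than the other.
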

\begin{proof}
Note that $\C^d\cong \R^{2d}$. We just need consider the case where $\F=\R$.
To this end, we use
$\mathbb{S}^d$ to denote the $d$-sphere and use ${\mathcal N}$ to denote the north pole
of $\mathbb{S}^d$. Let $g: \Rd \rightarrow \mathbb{S}^d\setminus \{ {\mathcal N}\}$ be the natural homeomorphism between $\Rd$ and $\mathbb{S}^d\setminus \{ {\mathcal N}\}$.
 Then ${\mathbf F}_g:=g\circ {\mathbf F}\circ g^{-1}$ is the operator which maps $\mathbb{S}^d\setminus \{{\mathcal N}\}$ to $g(\Rd_+)\subset \mathbb{S}^d\setminus \{{\mathcal N}\}$.
Set
 \begin{equation*}
 \widetilde{\mathbf F}_g(\vx):=\left\{
        \begin{array}{cl}
          {\mathbf F}_g(\vx), & \vx\in \mathbb{S}^d\setminus \{ {\mathcal N}\}\\
          \vx, & \vx={\mathcal N}
        \end{array}
      \right. .
\end{equation*}
Since ${\mathbf F}$ satisfies (\ref{eq:wuqiong}),  $\tilde{\mathbf F}_g$  is continuous
on $\mathbb{S}^d$.
 Note that $\widetilde{\mathbf F}_g(\mathbb{S}^d)\subset g(\Rd_+)\cup \{{\mathcal N}\}$. Thus the range of $\widetilde{\mathbf F}_g$ is not the whole $\mathbb{S}^d$, which means that $\widetilde{\mathbf F}_g(\mathbb{S}^d)\hookrightarrow \Rd$. We now get a continuous map from
$\mathbb{S}^d$ to $\Rd$ and we abuse the notation and still use $\widetilde{\mathbf F}_g$ to denote the map. By Borsuk-Ulam theorem, there exists $\{\vx,-\vx\}\subset \mathbb{S}^d$ such that $\widetilde{\mathbf F}_g(\vx)=\widetilde{\mathbf F}_g(-\vx)$. Let $\vy_1=g^{-1}(\vx)$ and $\vy_2=g^{-1}(-\vx)$, and then ${\mathbf F}(\vy_1)={\mathbf F}(\vy_2)$ since $g$ is injective. Now, we claim that $\vy_1\neq \infty$ and $\vy_2\neq \infty$. Indeed, if $\vy_1= \infty$, then $\vx={\mathcal N}$ since $\vx=g(\vy_1)$.
 Hence $-\vx$ is the south pole which implies that ${\mathbf F} (\vy_2)$ is finite since $\vy_2=g^{-1}(-\vx)$.
Hence, we find two points $\vy_1\neq \vy_2\in \Rd$, but ${\mathbf F}(\vy_1)={\mathbf F}(\vy_2)$, which arrives at the conclusion.
\end{proof}
\begin{remark}
In Theorem \ref{th:d1}, we require that the image of ${\mathbf F}=(f_1,\ldots,f_m)$ is a subset of ${\mathbb R}_+^d$. If we remove the requirement of $f_j(\vx)\geq 0$, then there exists a map ${\mathbf F}:\Rd\rightarrow \Rd$ which is injective on $\R^d$. In fact, we just take ${\mathbf F}(\vx)=(\innerp{\va_1,\vx},\ldots,\innerp{\va_d,\vx})$ where $\va_j\in \R^d$ satisfying  ${\rm span}\{\va_1,\ldots,\va_d\}=\R^d$, and then ${\mathbf F}$ is injective on $\R^d$.
Moreover, if we remove the condition (\ref{eq:wuqiong}), we can set ${\mathbf F}(\vx):=(\exp(x_1),\ldots,\exp(x_d))$ which is also injective on $\R^d$.
\end{remark}
\section{Generalized affine phase retrieval for real signals}
In this section, we consider the generalized affine phase retrieval for  real signals. We first state several  equivalent conditions  for the generalized affine phase retrieval.
Suppose that $M\in \R^{d\times r}$ and $\vb\in \R^r$. Then the following formula is straightforward to check:
\begin{equation}\label{basic_formula}
  \norm{M^\T \vx +\vb}^2-\norm{M^\T \vy +\vb}^2=4\left(\vu^\T MM^\T \vv+(M\vb)^\T \vv\right)\; \text{ for any } \vx,\vy\in \Rd
\end{equation}
where $\vu=\frac{1}{2}(\vx+\vy)$ and $\vv=\frac{1}{2}(\vx-\vy)$.
\begin{theorem}\label{equivelant in R}
Suppose that $r\in \Z_{\geq 1}$.
Let $\A=\{(M_j,\vb_j)\}_{j=1}^m \subset \R^{d\times r}\times \R^{r}$. Then the followings are equivalent:
\begin{itemize}
  \item[(1)] $\A$ has the generalize affine phase retrieval property for $\Rd$.
  \item[(2)] For any $\vu,\vv\in \Rd$ and $\vv\neq 0$, there exists a $j$ with $1\le j\le m$ such that $$\vu^\T M_jM_j^\T \vv+(M_j\vb_j)^\T \vv\neq 0.$$
  \item[(3)]  $ \spann\{M_jM_j^\T \vu+M_j\vb_j\}_{j=1}^m=\Rd$ for any $\vu\in \Rd$.
  \item[(4)] The Jacobian of $\M_{\A}$ has rank $d$ for all $\vx\in \Rd$.
\end{itemize}
\end{theorem}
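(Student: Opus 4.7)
The plan is to prove the chain of equivalences by exploiting the identity (\ref{basic_formula}) for (1)$\Leftrightarrow$(2), passing through a duality/orthogonal-complement argument for (2)$\Leftrightarrow$(3), and then computing the gradient of each $y_j$ directly for (3)$\Leftrightarrow$(4).

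For (1)$\Leftrightarrow$(2), I would observe that $\A$ fails the generalized affine phase retrieval property exactly when there exist distinct $\vx,\vy\in\R^d$ with $\M_\A(\vx)=\M_\A(\vy)$. Writing $\vu=\frac{1}{2}(\vx+\vy)$ and $\vv=\frac{1}{2}(\vx-\vy)$, the correspondence $(\vx,\vy)\leftrightarrow(\vu,\vv)$ is a bijection of $\R^d\times\R^d$ onto itself, with $\vx\neq\vy$ iff $\vv\neq 0$. Applying (\ref{basic_formula}) coordinate-wise, $\M_\A(\vx)=\M_\A(\vy)$ is equivalent to
\begin{equation*}
\vu^\T M_jM_j^\T \vv+(M_j\vb_j)^\T \vv=0\qquad \text{for all } j=1,\ldots,m.
\end{equation*}
Thus (1) fails iff there exist $\vu\in\R^d$ and $\vv\neq 0$ making all these quantities vanish, which is precisely the negation of (2).

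For (2)$\Leftrightarrow$(3), I would rewrite the scalar in (2) as the inner product $\langle M_jM_j^\T\vu+M_j\vb_j,\,\vv\rangle$. Condition (2) then asserts that for every $\vu\in\R^d$, no nonzero $\vv\in\R^d$ is orthogonal to all vectors $M_jM_j^\T\vu+M_j\vb_j$, $j=1,\ldots,m$. Since a subspace of $\R^d$ equals $\R^d$ iff its orthogonal complement is trivial, this is exactly (3).

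For (3)$\Leftrightarrow$(4), I would expand
\begin{equation*}
y_j(\vx)=\norm{M_j^\T\vx+\vb_j}^2=\vx^\T M_jM_j^\T\vx+2(M_j\vb_j)^\T\vx+\vb_j^\T\vb_j,
\end{equation*}
whose gradient is $\nabla y_j(\vx)=2\bigl(M_jM_j^\T\vx+M_j\vb_j\bigr)$. Hence the Jacobian of $\M_\A$ at $\vx$ has rows (up to the factor $2$) equal to $(M_jM_j^\T\vx+M_j\vb_j)^\T$, and so has rank $d$ iff these rows span $\R^d$. Taking $\vu=\vx$ identifies this exactly with (3). The only mild point is that (3) is required for every $\vu\in\R^d$ (not just in the image of some restricted set), but since $\vx$ ranges freely over $\R^d$ in (4), the two ranges coincide.

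No step is a genuine obstacle; the only place to be careful is the bijection $(\vx,\vy)\leftrightarrow(\vu,\vv)$ in (1)$\Leftrightarrow$(2), where one must verify both directions explicitly (given $\vu,\vv$ with $\vv\neq 0$ satisfying all the vanishing identities, recover the distinct pair $\vx=\vu+\vv$, $\vy=\vu-\vv$ witnessing non-injectivity of $\M_\A$).
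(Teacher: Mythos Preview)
Your proposal is correct and follows essentially the same approach as the paper: both use identity (\ref{basic_formula}) together with the change of variables $(\vx,\vy)\leftrightarrow(\vu,\vv)$ for (1)$\Leftrightarrow$(2), the orthogonal-complement argument for (2)$\Leftrightarrow$(3), and a direct computation of the Jacobian of $\M_\A$ for (3)$\Leftrightarrow$(4). The only cosmetic difference is that the paper records the Jacobian as the matrix with columns $2(M_jM_j^\T\vx+M_j\vb_j)$ rather than rows, which of course does not affect the rank.
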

\begin{proof}
(1)$\Leftrightarrow$(2). Assume that there exist $\vx\neq \vy$ in $\Rd$ such that $\M_{\A}(\vx)-\M_{\A}(\vy)=0$. Then from (\ref{basic_formula}) for all $j$ we have
\[ \norm{M_j^\T \vx +\vb_j}^2-\norm{M_j^\T \vy +\vb_j}^2=4(\vu^\T M_jM_j^\T \vv+(M_j\vb_j)^\T \vv)=0.
\]
Note that $\vv\neq 0$ and then we conclude a contradiction with (2). It means that (2) $\Rightarrow$ (1). The converse also follows from the same argument.

(2)$\Leftrightarrow$(3). If for some $\vu$ such that $ \spann\{M_jM_j^\T \vu+M_j\vb_j\}_{j=1}^m\neq\Rd$, then there exists  a $\vv\neq 0$ such that $\vv\perp \spann\{M_jM_j^\T \vu+M_j\vb_j\}_{j=1}^m$. It implies that $\vu^\T M_jM_j^\T \vv+(M_j\vb_j)^\T \vv=0$ for all $j=1,\ldots,m$. This is a contradiction. The converse clearly also holds.

(3)$\Leftrightarrow$(4). Note that the Jacobian $J(\vx)$ of the map $\M_{\A}$ at $\vx\in \Rd$ is exactly
\[ J(\vx)=2[M_1M_1^\T \vx+M_1\vb_1,\ldots,M_mM_m^\T \vx+M_m\vb_m].
\]
Thus (3) is equivalent to that the rank of $J(\vx)$ is $d$ for all $\vx\in \Rd$.
\end{proof}

\begin{corollary}
Suppose that $r\in \Z_{\geq 1}$ and $\A=\{(M_j,\vb_j)\}_{j=1}^m$ where $ (M_j,\vb_j)\in \R^{d\times r}\times \R^{r}$. If $\A$ has generalized affine phase retrieval property for $\R^d$ then
 $m\geq d+\floor{\frac{d}{r}}$.
\end{corollary}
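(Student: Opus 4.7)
The plan is to locate a point $\vu^{*}\in\R^{d}$ at which at least $k:=\lfloor d/r\rfloor$ of the affine forms $M_{j}^{\T}\vu+\vb_{j}$ vanish simultaneously, then invoke condition (3) of Theorem~\ref{equivelant in R} at $\vu^{*}$ to force $m-k\geq d$.

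Evaluating condition (3) at $\vu=0$ gives $\spann\{M_{j}\vb_{j}\}_{j=1}^{m}=\R^{d}$, which already yields $m\geq d$ and, more usefully, forces the column spaces $V_{j}:=\mathrm{col}(M_{j})$ (each of dimension at most $r$) to satisfy $V_{1}+\cdots+V_{m}=\R^{d}$. From this, I would extract $k$ indices $j_{1},\ldots,j_{k}$ such that the concatenation $\widetilde{M}_{J}:=[M_{j_{1}}\mid\cdots\mid M_{j_{k}}]\in\R^{d\times kr}$ has rank $kr$ (equivalently, $V_{j_{1}},\ldots,V_{j_{k}}$ are in direct-sum position). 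Since $kr\leq d$ and full column rank $kr$ makes $\widetilde{M}_{J}^{\T}:\R^{d}\to\R^{kr}$ surjective, the inhomogeneous system $M_{j_{i}}^{\T}\vu=-\vb_{j_{i}}$, $i=1,\ldots,k$, is consistent, and any solution $\vu^{*}$ satisfies $M_{j_{i}}^{\T}\vu^{*}+\vb_{j_{i}}=0$ for every $i$. Consequently the $k$ corresponding Jacobian columns $M_{j_{i}}(M_{j_{i}}^{\T}\vu^{*}+\vb_{j_{i}})$ vanish, and condition (3) at $\vu^{*}$ demands that the remaining $m-k$ columns still span $\R^{d}$, yielding $m\geq d+k=d+\lfloor d/r\rfloor$.

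The main obstacle is the extraction of the $k$ direct-sum indices. The bare span condition $\sum_{j}V_{j}=\R^{d}$ does not by itself guarantee such a subset exists: one can design subspaces $V_{j}$ that all share a common line, in which case no $k$-subset achieves combined rank $kr$. I expect the resolution to invoke the full GAP hypothesis (not merely its $\vu=0$ span consequence): in any such ``common overlap'' configuration one can pick a second $\vu$ along which the nonzero Jacobian columns become linearly dependent (since each $M_{j}(M_{j}^{\T}\vu+\vb_{j})$ is confined to the overlapping $V_{j}$), producing a rank drop of $J(\vu)$ that directly violates condition (4) of Theorem~\ref{equivelant in R}. A clean case split between ``direct-sum extraction succeeds'' and ``overlap degeneracy forces a rank drop'' then delivers the lower bound $m\geq d+\lfloor d/r\rfloor$ in full generality.
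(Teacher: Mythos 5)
Your overall strategy is the same as the paper's: produce a point $\vu^{*}$ at which $k:=\floor{d/r}$ of the affine forms $M_{j}^{\T}\vu+\vb_{j}$ vanish, so that condition (3) of Theorem~\ref{equivelant in R} at $\vu^{*}$ must be carried by the remaining $m-k$ Jacobian columns alone, forcing $m-k\geq d$. You are also right that the existence of such a $\vu^{*}$ is the crux and is not automatic: for a fixed subfamily the joint system $M_{j}^{\T}\vu=-\vb_{j}$, $j\in S$, can be inconsistent (take $M_{1}=M_{2}$ with $\vb_{1}\neq\vb_{2}$; this configuration occurs inside the paper's own tight construction in Theorem~\ref{th:lower}, where each block repeats one matrix with $r+1$ distinct shifts). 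The paper dismisses this step as ``a simple observation'' applied to the first $\floor{d/r}$ indices, so on this point you are being more careful than the source. Two remarks that would help you: (i) one may assume ${\vb_{j}\in{\rm Im}(M_{j}^{\T})}$ for every $j$, since replacing $\vb_{j}$ by its orthogonal projection onto ${\rm Im}(M_{j}^{\T})$ changes $\norm{M_{j}^{\T}\vx+\vb_{j}}^{2}$ only by an additive constant and leaves $M_{j}M_{j}^{\T}$ and $M_{j}\vb_{j}$ unchanged; this makes each single system solvable and settles $k=1$. (ii) Joint consistency only requires the affine vanishing loci $Z_{j}=\{\vu:M_{j}^{\T}\vu+\vb_{j}=0\}$ (codimension $\rank(M_{j})\leq r$) to share a point, not that the column spaces be in direct sum with full rank $kr$, so your extraction criterion is stronger than necessary and fails even in benign cases (e.g.\ when some $M_{j}$ has rank below $r$).

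The genuine gap is your fallback for the degenerate case, which you state only as an expectation. The asserted implication --- that a common overlap among the column spaces $V_{j}={\rm col}(M_{j})$ lets one find a $\vu$ at which the nonzero Jacobian columns become linearly dependent ``since each $M_{j}(M_{j}^{\T}\vu+\vb_{j})$ is confined to $V_{j}$'' --- does not follow: confinement of the $j$-th column to $V_{j}$, with the $V_{j}$ pairwise overlapping, imposes no linear relation among the columns themselves. For instance, if every $V_{j}$ contains a fixed line $\ell$ but the $V_{j}$ are otherwise in general position, one vector chosen from each $V_{j}$ can still have rank $d$, and nothing in your argument produces a specific $\vu$ where the rank drops. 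What the overlap genuinely buys is that the projections of the columns onto $\ell^{\perp}$ each live in a subspace of dimension at most $r-1$, which suggests quotienting by the overlap and inducting on $r$, or taking a maximal subfamily of the $Z_{j}$ with a common point and analyzing the residual system on its solution set; but either route needs nontrivial bookkeeping that the proposal does not supply, and the dimension counts are delicate. As written, your proof establishes the bound only when some $k$-element subfamily of the $Z_{j}$ has a common point; the complementary case remains open in your argument (and, it should be said, is elided in the paper's proof as well).
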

\begin{proof}
To this end, we just need show that if
$m\le d+\floor{\frac{d}{r}}-1$, then $\A$ is not generalized affine phase retrievable for $\Rd$.
When $r\geq d+1$, the conclusion follows from (3) in Theorem \ref{equivelant in R} directly. Hence, we only consider the case where $r\leq d$.
A simple observation is that there exists $\vu\in \Rd$ such that $M_j^\T \vu+\vb_j=0, j=1,\ldots,\floor{\frac{d}{r}}$.
 Thus, if $m\le \floor{\frac{d}{r}}+ d-1$ then
\[
\spann\{M_jM_j^\T \vu+M_j\vb_j\}_{j=1}^m=\spann\{M_jM_j^\T \vu+M_j\vb_j\}_{j=\floor{\frac{d}{r}}+1}^m\neq \Rd.
\]
According to (3) in Theorem \ref{equivelant in R}, we arrive at the conclusion.
\end{proof}

According to the above corollary, if $\{(A_j,\vb_j)\}_{j=1}^m$ is generalized affine phase retrievable for $\Rd$ then $m\geq d+\floor{\frac{d}{r}}$. We next show the bound $d+\floor{\frac{d}{r}}$ is tight provided $r \mid d$. To this end, we introduce the following lemma:
\begin{lemma} \label{lemma in R}
Suppose that $\vb_1,\ldots,\vb_{r+1} \in \R^r$ satisfy
\begin{equation}\label{eq:span}
{\rm span}\{\vb_2-\vb_1,\vb_3-\vb_1,\ldots,\vb_{r+1}-\vb_1\}=\R^r.
 \end{equation}
 Then $\vx=\vy$ if and only if $\norm{\vx+\vb_j}=\norm{\vy+\vb_j}$ for all $j=1,\ldots,r+1$ where $\vx,\vy\in \R^r$.
\end{lemma}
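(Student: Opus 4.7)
The ``only if'' direction is trivial, so the content of the lemma is the ``if'' direction. My plan is to convert the nonlinear norm equalities into a linear system in the vector $\vx-\vy$, and then use the spanning hypothesis (\ref{eq:span}) to force $\vx=\vy$.

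First I would square the hypothesis $\norm{\vx+\vb_j}=\norm{\vy+\vb_j}$ to get $\norm{\vx+\vb_j}^2=\norm{\vy+\vb_j}^2$ for every $j=1,\ldots,r+1$. Expanding both sides and cancelling $\norm{\vb_j}^2$ yields
\[
\norm{\vx}^2 + 2\vb_j^\T \vx \;=\; \norm{\vy}^2 + 2\vb_j^\T \vy, \qquad j=1,\ldots,r+1.
\]
Subtracting the $j=1$ equation from each of the equations for $j=2,\ldots,r+1$ eliminates the nonlinear terms $\norm{\vx}^2-\norm{\vy}^2$ and produces the $r$ linear relations
\[
(\vb_j-\vb_1)^\T (\vx-\vy) = 0, \qquad j=2,\ldots,r+1.
\]

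Next I would invoke the hypothesis (\ref{eq:span}), which states that $\{\vb_2-\vb_1,\ldots,\vb_{r+1}-\vb_1\}$ spans $\R^r$. Hence $\vx-\vy$ is orthogonal to every vector in $\R^r$, forcing $\vx-\vy=0$, i.e.\ $\vx=\vy$, as desired. There is no genuine obstacle here; the only subtle point is recognising that squaring the norm equalities linearises the system after one subtraction, at which stage the spanning condition is exactly what is needed. It is also worth noting that this argument shows the spanning condition in (\ref{eq:span}) is tight: if the $\vb_j-\vb_1$'s fail to span $\R^r$, one can pick a nonzero $\vw$ orthogonal to all of them and check that $\vx$ and $\vy:=\vx+\lambda\vw$ (for suitable $\lambda$ making the $\norm{\vx}^2-\norm{\vy}^2$ terms also vanish) give a counterexample, explaining why $r+1$ translates are the natural count.
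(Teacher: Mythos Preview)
Your proof is correct and follows essentially the same approach as the paper: square the norm equalities to linearise, then use the spanning hypothesis (\ref{eq:span}) to force $\vx-\vy=0$. The only cosmetic difference is that the paper introduces an auxiliary variable $t=(\norm{\vx}^2-\norm{\vy}^2)/2$ and solves the $(r+1)\times(r+1)$ augmented linear system in $(\vx-\vy,t)$, whereas you eliminate $t$ at the outset by subtracting the $j=1$ equation; these are equivalent row-reductions of the same system.
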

\begin{proof}
We denote $\vz:=\vx-\vy \in \R^r$ and $t:=(\norm{\vx}^2-\norm{\vy}^2)/2$. Then $\norm{\vx+\vb_j}=\norm{\vy+\vb_j}$ is equivalent to $\vb_j^\T \vz+t=0$ for all $j=1,\ldots,r+1$.
To this end, we just need show that $\norm{\vx+\vb_j}=\norm{\vy+\vb_j}$ for all $j=1,\ldots,r+1$ implies $\vx=\vy$.
According to (\ref{eq:span}), the linear system
 \[ \left(
      \begin{array}{cc}
        \vb_1^\T & 1 \\
         \vdots & \vdots \\
        \vb_{r+1}^\T & 1 \\
      \end{array}
    \right)\left(
             \begin{array}{c}
               \vz \\
               t \\
             \end{array}
           \right)=0
 \]
has only zero solution, i.e., $(\vz,t)=0$, which implies  $\vx=\vy$.
\end{proof}

\begin{theorem}\label{th:lower}
Suppose that $r\in \Z_{\geq 1}$ and $m\geq d+\floor{\frac{d}{r}}+\epsilon_{d,r}$ where $\epsilon_{d,r}=0$ if $d/r\in \Z$ and $1$ if $d/r\notin \Z$. Then there exist $\{(M_j,\vb_j)\}_{j=1}^m \subset \R^{d\times r}\times \R^{r}$ which has generalized affine phase retrieval property for $\Rd$.
\end{theorem}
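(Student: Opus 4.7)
The plan is to prove this by an explicit block construction, reducing $d$-dimensional generalized affine phase retrieval to several lower-dimensional instances each of which fits the framework of Lemma~\ref{lemma in R}. It suffices to handle the case $m = d + \lfloor d/r \rfloor + \epsilon_{d,r}$, since adding duplicate measurements cannot destroy injectivity. Write $d = kr + s$ with $0 \le s < r$, so $k = \lfloor d/r\rfloor$.

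First I would partition the coordinates of $\vx \in \Rd$ into $k$ consecutive blocks of length $r$ together with a trailing block of length $s$. For the $i$-th full block ($i = 1,\dots,k$), I take $r+1$ copies of a single matrix $M^{(i)} \in \R^{d\times r}$ whose transpose is the coordinate projection onto the $i$-th block (the rows of $M^{(i)}$ indexed by this block form the identity $I_r$, and the remaining rows vanish). Pair these with $r+1$ shift vectors $\vb_1^{(i)},\dots,\vb_{r+1}^{(i)}\in \R^r$ chosen to satisfy $\spann\{\vb_\ell^{(i)} - \vb_1^{(i)}:\ell = 2,\dots,r+1\} = \R^r$, the affine-spanning hypothesis of Lemma~\ref{lemma in R}. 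Then each of the associated $r+1$ measurements $\norm{(M^{(i)})^\T\vx + \vb_\ell^{(i)}}^2$ depends only on the $i$-th block of $\vx$, and Lemma~\ref{lemma in R} forces that block to be uniquely determined. Carrying this out for each $i$ uses $k(r+1) = d + k$ measurements, which already proves the theorem when $s=0$.

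When $s > 0$, I would add $s+1$ further measurements dedicated to the trailing short block. Take $M^{(k+1)} \in \R^{d\times r}$ whose rows corresponding to the last $s$ coordinates of $\vx$ are the first $s$ standard basis vectors of $\R^r$ and whose other rows vanish, so that $(M^{(k+1)})^\T \vx = (x_{kr+1},\dots,x_d,0,\dots,0)^\T \in \R^r$. For any shift $\vb \in \R^r$, the quantity $\norm{(M^{(k+1)})^\T \vx + \vb}^2$ equals $\norm{\vz + \vb'}^2 + c$, where $\vz := (x_{kr+1},\dots,x_d)\in \R^s$, $\vb'$ is the vector of the first $s$ entries of $\vb$, and $c$ is a constant depending only on the remaining $r-s$ entries of $\vb$. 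Choosing $\vb_1^{(k+1)},\dots,\vb_{s+1}^{(k+1)}$ so that their first-$s$-coordinate restrictions satisfy the affine-spanning condition in $\R^s$, Lemma~\ref{lemma in R} applied in dimension $s$ recovers $\vz$. The total count becomes $k(r+1) + (s+1) = d + k + 1 = d + \lfloor d/r\rfloor + \epsilon_{d,r}$, as required.

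The only remaining point is injectivity of $\M_{\A}$ on all of $\Rd$: since each measurement in this construction is a function of exactly one coordinate block and that block is uniquely recovered from its designated measurements, any two signals with the same image under $\M_{\A}$ must agree block by block, hence coordinate by coordinate. I do not foresee any serious obstacle here; the content of the argument is just the clean block decomposition together with the embedding $\R^s \hookrightarrow \R^r$ that lets us keep the prescribed ambient matrix size $d\times r$ even for the short trailing block.
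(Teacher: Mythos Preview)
Your proposal is correct and follows essentially the same approach as the paper: both partition the coordinates of $\vx$ into consecutive blocks of size $r$ (plus a possible short remainder), use $r+1$ measurements per full block built from the coordinate projection paired with affinely spanning shifts, and invoke Lemma~\ref{lemma in R} blockwise. Your treatment of the trailing block via the embedding $\R^s\hookrightarrow\R^r$ is in fact more explicit than the paper's, which merely says ``similar as before''; the counts and the logic are identical.
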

\begin{proof}
We set
\[
T_t:=\{(t-1)r+1,\ldots, tr\},\quad t=1,\ldots,\floor{\frac{d}{r}}
\]
and
\[
T_{\floor{\frac{d}{r}}+1}:=\left\{ r\floor{\frac{d}{r}}+1,\ldots,d\right\}.
\]
Note that if ${d}/{r}$ is an integer, then $T_{\floor{\frac{d}{r}}+1}=\emptyset$.
For $\vx\in \R^d$, set $\vx_{T_t}:=\vx{\mathbb I}_{T_t}$ where ${\mathbb I}_{T_t}$  denotes the indicator function of the set $T_t$ (namely ${\mathbb I}_{T_t}(s)=1$ if $s\in T_t$ and $0$ if $s\notin T_t$). Similarly,  we use $(M_j)_{T_t}\in \R^{r\times r}$ to denote a submatrix of $M_j\in \R^{d\times r}$ with row  indexes in $T_t$.
We assume that $\{(M_j,\vb_j)\}_{j=1}^m$ satisfy the following conditions:
\begin{enumerate}[(i)]
\item The matrix $(M_j)_{T_t}=I_r$  and $M_j\setminus (M_j)_{T_t}$ is a zero matrix for $j=(t-1)(r+1)+1,\ldots,t(r+1)$ and $t=1,\ldots,\floor{d/r}$, where $I_r\in \R^{r\times r}$ is the identity matrix.
\item
 Set $\vb_{(t-1)(r+1)+k}=\vb_k'$ for $k=1,\ldots,r+1, t=1,\ldots, \floor{d/r}$. The vectors $\vb'_1,\ldots,\vb'_{r+1}\in \R^r$ satisfy ${\rm span}\{\vb'_2-\vb'_1,\vb'_3-\vb'_1,\ldots,\vb'_{r+1}-\vb'_1\}=\R^r$.
\end{enumerate}
Then, based on Lemma \ref{lemma in R}, for each $t=1,\ldots,\floor{d/r}$, we can recover $\vx_{T_t}$ from $\norm{M_j^\T \vx+\vb_j}, j=(t-1)(r+1)+1,\ldots,t(r+1)$. Hence, when $d/r\in \Z$, we can recover  $\vx=\vx_{T_1}+\cdots+\vx_{T_{(d/r+1)}}$ from $\|M_j\vx+\vb_j\|_2, j=1,\ldots,m$ where $m=(r+1)\floor{d/r}=d+\floor{d/r}$.

When $d/r$ is not an integer, we need consider the recovery of  $\vx_{T_{\floor{d/r}+1}}$. Note that $\#T_{\floor{d/r}+1}=d-r\floor{d/r}$. Similar as before, we can construct matrix $M_{j}\in \R^{d\times r}, $ and $\vb_j\in \R^r, j=\floor{d/r}(r+1)+1,\ldots,\floor{d/r}+d+1$ so that one can recover $\vx_{T_{\floor{d/r}+1}}$ from $\norm{M_{j}^\T \vx+\vb_j}, j=\floor{d/r}(r+1)+1,\ldots,\floor{d/r}+d+1 $. Combining the measurement matrices above, we obtain the measurement number $m=\floor{d/r} (r+1)+d-r\floor{d/r}+1=d+\floor{d/r}+1$ is sufficient to recover $\vx$ provided $d/r$ is not an integer.
\end{proof}

\begin{remark}
If we take $r=d$ in Theorem \ref{th:lower}, we can construct $m=d+1$ matrices $\{(M_j,\vb_j)\}_{j=1}^m$ so that
$\M_{\A}(\vx)=(\|M_1^*\vx+\vb_1\|_2^2,\ldots,\|M_m^*\vx+\vb_m\|_2^2)$
is injective on $\R^d$. Hence, generalized affine phase retrieval can achieve the lower bound
$m=d+1$ which is presented in Theorem \ref{th:d1}.

\end{remark}

As shown in \cite[ Theorem 2.3]{wang2016generalized}, the measurements matrices which have generalize phase retrieval property is an open set.
The following theorem shows that the set of ${\mathcal A}$ having generalized {\em affine} phase retrieval property is {\em not} an open set in $\R^{m(d\times r)}\times \R^{mr}$. The result shows a difference between generalized phase retrieval and generalized affine phase retrieval.

\begin{theorem}
Let $ r\in \Z_{\geq 1}$ and  $m\geq d+\floor{\frac{d}{r}}+\epsilon_{d,r}$ where $\epsilon_{d,r}=0$ if $d/r\in \Z$ and $1$ if $d/r\notin \Z$.
Then the set of generalized affine phase retrieval $\{(M_1,\vb_1),\ldots (M_m,\vb_m)\} \in \R^{m(d\times r)}\times \R^{mr}$ is not an open set in $\R^{m(d\times r)}\times \R^{mr}$.
\end{theorem}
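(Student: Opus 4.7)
The plan is to exhibit a specific $\A^* \in \R^{m(d\times r)} \times \R^{mr}$ that has the generalized affine phase retrieval property but is the limit of configurations $\A_n$ that do not. Showing such behavior around a single point suffices to conclude that the set of valid $\A$'s is not open at $\A^*$. Construction of $\A^*$: take the first $m_0 := d + \floor{d/r} + \epsilon_{d,r}$ slots to be the block measurements of Theorem \ref{th:lower} (so the corresponding $\vb_j$'s decompose into groups $\vb'_1, \ldots, \vb'_{r+1} \in \R^r$ satisfying the span property (\ref{eq:span})), and set the remaining $m - m_0$ slots to $(M_j^*, \vb_j^*) = (0,0)$. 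The padding contributes only the zero vector to the spanning family in condition (3) of Theorem \ref{equivelant in R}, so $\A^*$ retains the property.

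Construction of $\A_n$ and identification of the bad direction: pick a measurement $j_0$ of $\A^*$ whose $M_{j_0}^*$ is supported in a single block $T_t$ of the block construction, and set $M_{j_0}^{(n)} := M_{j_0}^* + \tfrac{1}{n} E$ for a matrix $E \in \R^{d \times r}$, leaving every other parameter of $\A^*$ fixed; clearly $\A_n \to \A^*$. When $d > r$ and at least two blocks are available, take $E_{T_{t'}} = I_r$ for some block $T_{t'} \neq T_t$ and the other rows of $E$ to be zero. I then choose $\vu_n \in \R^d$ with $\vu_{n, T_{t'}} = n(\vb'_{k_1} - \vb'_{k_0})$ for two distinct indices $k_0, k_1$, and with $\vu_{n, T_t}$ tuned so that both the perturbed $j_0$-measurement and a second unperturbed block-$T_t$ measurement vanish at $\vu_n$. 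The remaining $r-1$ block-$T_t$ contributions have the form $\vb'_k - \vb'_{k_1}$ for $k \neq k_0, k_1$ and span only an $(r-1)$-dimensional subspace of the $T_t$-subspace, while every other block contributes its full $r$-dimensional (or residual $s$-dimensional) subspace, giving total span dimension $d - 1$. When $d = r$ there is a single block and I instead take $E = \ve_d \ve_1^\T$, a rank-one non-symmetric perturbation; at $\vu_n = -n\ve_1$ a direct computation gives $M_{j_0}^{(n)}(M_{j_0}^{(n)})^\T \vu_n + M_{j_0}^{(n)}\vb_{j_0} = \vu_n$, so the perturbed vector coincides with an earlier measurement vector and the total span collapses to $\spann\{\ve_1,\ldots,\ve_{d-1}\}$. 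In either case $\A_n$ violates condition (3), hence fails the generalized affine phase retrieval property.

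The main obstacle is the computation that establishes the rank drop at $\vu_n$, which requires careful bookkeeping of how the chosen perturbation direction interacts with the affine offsets $\vb'_k$ from Lemma \ref{lemma in R}; in particular, one must verify using (\ref{eq:span}) that after removing two of the $r+1$ block-$T_t$ vectors the remaining $r-1$ vectors are linearly independent. Conceptually, non-openness is possible precisely because condition (3) must hold uniformly over the non-compact space $\R^d$: the bad directions constructed above satisfy $\|\vu_n\| = O(n)$ and escape to infinity as the perturbation vanishes. This is the sharp contrast with generalized non-affine phase retrieval \cite[Theorem 2.3]{wang2016generalized}, where the analogous span condition lives over a compact projective space and the corresponding good set is open.
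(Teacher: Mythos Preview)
Your overall strategy coincides with the paper's: take $\A^*$ to be the block construction of Theorem~\ref{th:lower} (padded by zeros), perturb a single $M_{j_0}$ by $\tfrac{1}{n}E$, and exhibit a witness $\vu_n$ with $\|\vu_n\|\to\infty$ at which condition~(3) of Theorem~\ref{equivelant in R} fails. The paper argues via the definition directly (producing $\vx\neq\vy$ with equal measurements) and, for $r\le d-1$, perturbs \emph{within} the first block by $\delta b_{1,1}E_{21}$ after imposing $b_{2,1}=\cdots=b_{m,1}=0$; your cross-block perturbation $E_{T_{t'}}=I_r$ is a different but equally valid route, and your concluding remark on non-compactness is exactly the conceptual point.

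There is, however, a concrete gap in your $d=r$ computation. With $E=\ve_d\ve_1^\T$ and $\vu_n=-n\ve_1$ one has
\[
M_{j_0}^{(n)}(M_{j_0}^{(n)})^\T\vu_n+M_{j_0}^{(n)}\vb_{j_0}
\;=\;-n\ve_1-\ve_d+\vb_{j_0}+\tfrac{1}{n}(\vb_{j_0})_1\,\ve_d,
\]
which equals $\vu_n$ only when $\vb_{j_0}=\ve_d$; you never specify this. Likewise, for the total span to collapse to $\spann\{\ve_1,\ldots,\ve_{d-1}\}$ you must also force every other $\vb'_k$ to have vanishing last coordinate, and then check that (\ref{eq:span}) still holds. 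All of this is easily arranged (e.g.\ $\vb'_1=0$, $\vb'_k=\ve_{k-1}$ for $2\le k\le d+1$, $j_0=d+1$), but as written the ``direct computation'' is simply false for generic $\vb_{j_0}$. Similarly, in the $d>r$ case you should state $\vu_{n,T_t}=-\vb'_{k_1}$ explicitly and specify $\vu_{n,T_s}$ on the remaining blocks $s\neq t,t'$, noting that the affine independence encoded in (\ref{eq:span}) guarantees those blocks still contribute full rank. Once these choices are written down, your argument goes through.
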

\begin{proof}
To this end, we only need to find a measurement set $\{(M_1,\vb_1),\ldots,(M_m,\vb_m)\} \in \R^{m(d\times r)}\times \R^{mr}$ which has generalized affine phase retrieval property for $\R^d$, but for any $\epsilon>0$ there exists a small perturbation measurement set  $\{(\widetilde{M}_1,\vb_1),\ldots,(\widetilde{M}_m,\vb_m)\}
 \in \R^{m(d\times r)}\times \R^{mr}$ with $\normf{M_j-\widetilde{M}_j}\le \epsilon$ which is not generalized affine phase retrievable.

 We first consider the case where $r=d$.
 Without loss of generality we only need to consider the case $m=d+1$ (for the case where $m>d+1$, we just take $(M_j,\vb_j)=0$ for $j=d+2,\ldots,m$). Set $M_j:=I_d, j=1,\ldots,d+1,$ and assume that $\vb_1,\ldots,\vb_{d+1}\in \Rd$ satisfy
\[
{\rm span}\{\vb_2-\vb_1,\ldots,\vb_{d+1}-\vb_1\}=\Rd.
 \]
 Here, we also require that  the first entry of $\vb_2, \ldots,\vb_{d+1}\in \R^r$ is zero, i.e., $b_{2,1}=\cdots=b_{d+1,1}=0$. According to Lemma \ref{lemma in R}, the measurement set $\{(M_1,\vb_1),\ldots,(M_{d+1},\vb_{d+1})\} \in \R^{(d+1)(d\times d)}\times \R^{(d+1)d}$ has generalized affine phase retrievable property for $\R^d$.

 We perturb $M_1$ to $\widetilde{M}_1=I_d+\delta b_{1,1}E_{21}$, where $E_{21}$ denotes the matrix with $(2,1)$-th entry being $1$ and all other entries being $0$ and $\delta>0$. Furthermore, we let $\widetilde{M}_j=M_j$ for $j=2,\ldots,d+1$. Then $\{(\widetilde{M}_1,\vb_1),\ldots,(\widetilde{M}_{d+1},\vb_{d+1})\} \in \R^{(d+1)(d\times r)}\times \R^{(d+1)r}$ is not generalized affine phase retrievable. To see this, we let $\vx=(b_{1,1},-1/\delta,0,\ldots,0)^\T$ and $\vy=(-b_{1,1},-1/\delta,0,\ldots,0)^\T$. It is easy to check that
\begin{equation*}
  \norm{\widetilde{M}_j^\T \vx+\vb_j}=\norm{\widetilde{M}_j^\T \vy+\vb_j} \quad j=1,\ldots,d+1.
\end{equation*}
By taking $\delta$ sufficiently small, we will have $\normf{M_j-\widetilde{M}_j}\le \epsilon$, which complete the proof for the case where $r=d$.

We next consider the case where $r\leq d-1$. Similar with the proof of Theorem \ref{th:lower}, we set
\[
T_t:=\{(t-1)r+1,\ldots, tr\},\quad t=1,\ldots,\floor{\frac{d}{r}}
\]
and
\[
T_{\floor{\frac{d}{r}}+1}:=\left\{ r\floor{\frac{d}{r}}+1,\ldots,d\right\}.
\]
For $m= d+\floor{\frac{d}{r}}+\epsilon_{d,r}$, we require that $\{(M_1,\vb_1),\ldots,(M_m,\vb_m)\}$ satisfy the conditions (i) and (ii) in the proof of Theorem \ref{th:lower}. We furthermore require that the first entry of $\vb_2,\ldots,\vb_m$ is $0$, i.e., $b_{2,1}=\cdots=b_{m,1}=0$. Note that $(M_1)_{T_1}=I_r$.  We perturb $(M_1)_{T_1}$ to $(\widetilde{M}_1)_{T_1}=I_r+\delta b_{1,1} E_{21}$ and $\widetilde{M}_j=M_j, j=2,\ldots,m$. Then similar as before $(\widetilde{M}_j,\vb_j)_{j=1}^m$ does not have affine phase retrieval property but we will have $\normf{M_j-\widetilde{M}_j}\le \epsilon$ by taking $\delta$ sufficiently small. We complete the proof for $r\leq d-1$.
\end{proof}

The following theorem shows that if the  measurements number $m\ge 2d$, then a generic $\{(M_1,\vb_1),\ldots,(M_m,\vb_m)\} \in \R^{m(d\times r)}\times \R^{mr}$ has generalized affine phase retrieval property for $\Rd$.
\begin{theorem} \label{generic measurements in R}
Let  $m\ge 2d$ and $r\in \Z_{\geq 1}$. Then a generic ${\mathcal A}=\{(M_1,\vb_1),\ldots,(M_m,\vb_m)\} \in \R^{m(d\times r)}\times \R^{mr}$ has generalized affine phase retrieval property for $\Rd$.
\end{theorem}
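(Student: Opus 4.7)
The plan is to combine the equivalent condition (2) from Theorem \ref{equivelant in R} with a dimension count on an incidence variety, following the algebraic-geometric strategy of \cite{balan2006signal, conca2015algebraic, wang2016generalized}. By that theorem, $\A$ fails to have the generalized affine phase retrieval property if and only if there exists $(\vu,\vv)\in \R^d\times (\R^d\setminus\{0\})$ with
\[
F_j(\A,\vu,\vv) \;:=\; \vu^\T M_j M_j^\T \vv + (M_j\vb_j)^\T \vv \;=\; 0, \qquad j=1,\ldots,m.
\]
Each $F_j$ is homogeneous linear in $\vv$, so its vanishing depends only on $[\vv]\in \PP^{d-1}(\R)$. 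I would therefore work over the parameter space $P:=\R^d\times \PP^{d-1}(\R)$ of real dimension $2d-1$ and form the incidence set
\[
V \;:=\; \bigl\{(\A,\vu,[\vv]) \in \R^{m(d\times r)}\times \R^{mr}\times P : F_j(\A,\vu,\vv)=0 \text{ for all } j \bigr\}.
\]

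The core step is a fiber-dimension bound via the projection $V\to P$. Fix $(\vu,[\vv])\in P$ with $\vv\neq 0$. Viewed as a polynomial in $(M_j,\vb_j)$, the function $F_j$ contains the non-trivial bilinear term $\vb_j^\T M_j^\T\vv$, so $F_j$ is a non-constant polynomial on $\R^{d\times r}\times\R^r$ and $\{F_j=0\}$ has codimension at least one. Since the $m$ equations $F_1,\ldots,F_m$ use pairwise disjoint blocks of variables, the joint zero locus has codimension at least $m$ in the $\A$-space. Hence each fiber of $V\to P$ has real dimension at most $m(dr+r)-m$, and
\[
\dim V \;\le\; m(dr+r) - m + (2d-1).
\]

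Next I would project by $\pi:V\to \R^{m(d\times r)}\times \R^{mr}$ forgetting $(\vu,[\vv])$. Then $\pi(V)$ is precisely the set of bad $\A$. By the Tarski--Seidenberg theorem $\pi(V)$ is semialgebraic, of real dimension at most $\dim V$. When $m\ge 2d$ the estimate above yields $\dim\pi(V)\le m(dr+r)-1$, strictly less than the ambient dimension $m(dr+r)$. Therefore $\pi(V)$ is contained in a proper real algebraic subvariety of $\R^{m(d\times r)}\times \R^{mr}$, and its complement is Zariski-open, dense, and of full Lebesgue measure; every $\A$ in this complement has the generalized affine phase retrieval property for $\R^d$.

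The main obstacle I anticipate is verifying the codimension bound for the fibers of $V\to P$ uniformly over the base: a priori the equations $F_j$ could degenerate on some sublocus of $P$ and cause the fiber to jump up in dimension. The rescue is the block structure across $j$: different indices involve disjoint variables $(M_j,\vb_j)$, and the linear-in-$\vb_j$ form $\vb_j\mapsto \vb_j^\T M_j^\T \vv$ is a non-zero polynomial on $\R^{d\times r}\times\R^r$ as soon as $\vv\neq 0$, so $\{F_j=0\}$ is a proper subvariety of its own block with real codimension at least one. The joint system then cuts codimension at least $m$ uniformly over $P$, delivering the required dimension inequality.
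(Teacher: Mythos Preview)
Your proposal is correct and follows the same overall strategy as the paper---build an incidence variety over the space of witnesses, bound the fiber dimension using the block structure in $j$, and project---but your implementation differs in two respects worth noting. First, you parametrize the witnesses directly by $(\vu,[\vv])\in\R^d\times\PP^{d-1}(\R)$ via condition~(2) of Theorem~\ref{equivelant in R}, whereas the paper passes through Lemma~\ref{equ:generic R} to repackage the witness as a symmetric rank-$\le 2$ matrix $Q\in\R^{(d+1)\times(d+1)}$ with $Q_{d+1,d+1}=0$ and a normalization; both witness spaces have dimension $2d-1$, and the two parametrizations are related by $Q=c(\x\x^\T-\y\y^\T)$ with $\vu=\tfrac12(\vx+\vy)$, $\vv=\tfrac12(\vx-\vy)$. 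Second, you work entirely over $\R$ and invoke Tarski--Seidenberg plus semialgebraic dimension theory, while the paper complexifies the incidence variety, computes its dimension using standard complex algebraic geometry (fiber dimension from \cite{harris2013algebraic}), and then descends via $\dim_\R(X_\R)\le\dim_\C X$ from \cite{Dan}. Your route is somewhat more direct and avoids the auxiliary lemma and the complex detour; the paper's route buys access to cleaner dimension theorems over $\C$ and avoids any subtlety about uniform fiber bounds in the real setting. One small point to tighten: when you assert $\dim V\le \dim P+\max_p\dim(\text{fiber})$ over $\R$, this is justified by semialgebraic triviality (Hardt's theorem) or cell decomposition rather than by the complex fiber-dimension theorem, so it is worth citing that explicitly.
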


To prove this theorem, we introduce some notations and a  lemma.  First, recall that
\[ y_j=\norm{M_j^* \vx+\vb_j}^2=\x^* A_j \x, \; j=1,\ldots,m ,
\]
where
\[ \x=\left(\begin{array}{c}
              \vx \\
              1
            \end{array}
\right) \qquad \text{and} \qquad A_j=\left(
                                     \begin{array}{cc}
                                       M_jM_j^* & M_j\vb_j\\
                                       (M_j\vb_j)^* & \vb_j^* \vb_j \\
                                     \end{array}
                                   \right).
\]
Thus, the map $\M_{\A}$ can be rewritten as
\begin{eqnarray*}
  \M_{\A}(\vx) &:=& (\norm{M_1^* \vx+\vb_1}^2,\ldots,\norm{M_m^* \vx+\vb_m}^2) \\
   &=& (\tr(A_1\x\x^*),\ldots,\tr(A_m\x\x^*)).
\end{eqnarray*}
For $\{(M_1,\vb_1),\ldots,(M_m,\vb_m)\} \in \C^{m(d\times r)}\times \C^{mr}$, we define the map $\mathbf{T}:\C^{(d+1)\times (d+1)}\rightarrow \C^m$  by
\begin{equation} \label{the map T}
\mathbf{T}(Q):= \left(\tr(A_1^*Q),\ldots, \tr(A_m^*Q)\right).
\end{equation}

\begin{lemma}\label{equ:generic R}
Suppose that $r\in \Z_{\geq 1}$.  Then ${\mathcal A}=\{(M_1,\vb_1),\ldots,(M_m,\vb_m)\} \in \R^{m(d\times r)}\times \R^{mr}$ is not generalized affine phase retrievable if and only if there exists nonzero $Q\in \R^{(d+1)\times (d+1)}$ satisfies
\begin{equation} \label{condition of Q}
\begin{array}{l}
  Q^\T=Q,\quad Q_{d+1,d+1}=0,\quad \rank(Q)\le 2, \\
  \mathbf{T}(Q)=0, \quad Q_{1,d+1}^2+\cdots+ Q_{d,d+1}^2=1.
\end{array}
\end{equation}
\end{lemma}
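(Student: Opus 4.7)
The plan is to translate the failure of generalized affine phase retrieval into an algebraic condition on the kernel of the linear map $\mathbf{T}$ from (\ref{the map T}): a collision $\M_{\A}(\vx)=\M_{\A}(\vy)$ with $\vx\ne\vy$ should correspond exactly to a normalized symmetric rank-$\le 2$ matrix in $\ker\mathbf{T}$ with vanishing $(d+1,d+1)$-entry. For the forward direction, assume $\M_{\A}(\vx)=\M_{\A}(\vy)$ for some $\vx\ne\vy\in\R^d$, set $\x:=(\vx^\T,1)^\T,\ \y:=(\vy^\T,1)^\T\in\R^{d+1}$, and $Q_0:=\x\x^\T-\y\y^\T$. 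By (\ref{eq:con}), $\tr(A_j Q_0)=\x^\T A_j\x-\y^\T A_j\y=0$ for every $j$, so $\mathbf{T}(Q_0)=0$ (the $A_j$ are symmetric). Evidently $Q_0^\T=Q_0$, $\rank(Q_0)\le 2$, and $(Q_0)_{d+1,d+1}=1-1=0$, while $(Q_0)_{i,d+1}=x_i-y_i$ gives $\sum_{i=1}^d(Q_0)_{i,d+1}^2=\norm{\vx-\vy}^2>0$; the rescaled matrix $Q:=Q_0/\norm{\vx-\vy}$ then satisfies (\ref{condition of Q}).

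For the reverse direction, let $Q$ satisfy (\ref{condition of Q}) and spectrally decompose this real symmetric rank-$\le 2$ matrix. If $Q$ were positive semidefinite, $Q=\vu\vu^\T+\vv\vv^\T$, then $Q_{d+1,d+1}=u_{d+1}^2+v_{d+1}^2=0$ would force $u_{d+1}=v_{d+1}=0$ and hence $Q_{i,d+1}=u_iu_{d+1}+v_iv_{d+1}=0$ for all $i\le d$, contradicting the normalization. The negative semidefinite and rank-$1$ subcases are ruled out identically, so $Q$ must have signature $(1,1)$ and we may write $Q=\vu\vu^\T-\vv\vv^\T$ with $\vu,\vv\in\R^{d+1}$. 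Now $Q_{d+1,d+1}=u_{d+1}^2-v_{d+1}^2=0$ yields $u_{d+1}=\pm v_{d+1}$; the normalization again forbids both from being zero, so $c:=u_{d+1}\ne 0$, and after replacing $\vv$ by $-\vv$ if necessary we may arrange $v_{d+1}=c$.

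Writing $\vu=c\,\x'$, $\vv=c\,\y'$ with $\x'=(\vx'^\T,1)^\T,\ \y'=(\vy'^\T,1)^\T$ for some $\vx',\vy'\in\R^d$, we get $Q=c^2(\x'\x'^\T-\y'\y'^\T)$. The condition $\mathbf{T}(Q)=0$ then becomes $\x'^\T A_j\x'=\y'^\T A_j\y'$ for all $j$, i.e.\ $\M_{\A}(\vx')=\M_{\A}(\vy')$, and $1=\sum_{i=1}^d Q_{i,d+1}^2=c^4\norm{\vx'-\vy'}^2$ ensures $\vx'\ne\vy'$, so $\A$ fails generalized affine phase retrieval. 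The main obstacle is the signature case analysis in the reverse direction: both $Q_{d+1,d+1}=0$ and $\sum_i Q_{i,d+1}^2=1$ are essential, the first to pin down the eigenvector last-coordinates and the second to rule out the semidefinite and rank-$1$ degeneracies in favor of the indefinite rank-$2$ case that actually encodes a pair $(\vx',\vy')$.
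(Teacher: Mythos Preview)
Your proof is correct and follows essentially the same approach as the paper's: both directions use the identification $\M_\A(\vx)=\M_\A(\vy)\Leftrightarrow \mathbf{T}(\x\x^\T-\y\y^\T)=0$, and the reverse direction hinges on the spectral decomposition of $Q$ together with the constraints $Q_{d+1,d+1}=0$ and $\sum_i Q_{i,d+1}^2=1$ to force the indefinite rank-two form $Q=c(\x\x^\T-\y\y^\T)$. Your signature case analysis is just a repackaging of the paper's argument that in the spectral decomposition $Q=\lambda_1\uu\uu^\T-\lambda_2\uv\uv^\T$ the eigenvalues $\lambda_1,\lambda_2$ must be nonzero with the same sign and $\tilde u_{d+1},\tilde v_{d+1}\neq 0$.
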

\begin{proof}
 Assume that $\A$ is not generalized affine phase retrievable, and then there exist $\vx,\vy \in \R^d$ with $\vx\neq \vy $ such that $\M_\A(\vx)=\M_\A(\vy)$. It implies that
 \[
 \mathbf{T}(\x\x^\T-\y\y^\T)=0,
 \]
 where
 \[
 \x=\left(\begin{array}{c}
              \vx \\
              1
            \end{array}
            \right) ,\qquad
 \y=\left(\begin{array}{c}
              {\mathbf y} \\
              1
            \end{array}
            \right) .
 \]
 Take $Q:=\lambda(\x\x^\T-\y\y^\T)$ where $\lambda=1/\norm{\vx-\vy}^2 \in \R$ is a constant.  Then $Q$ is a nonzero matrix which satisfies (\ref{condition of Q}). \\

 We next assume there exists a nonzero $Q_0$ satisfies (\ref{condition of Q}). According to the spectral decomposition theorem,  we have
 \[ Q_0=\lambda_1 \uu\uu^\T-\lambda_2\uv\uv^\T
 \]
 where $\lambda_1,\lambda_2 \in \R$ and $\uu,\uv $ are normalized orthogonal vectors in $\R^{d+1}$. Since $(Q_0)_{d+1,d+1}=0$, which gives that
 \[ \lambda_1\tilde{u}_{d+1}^2-\lambda_2\tilde{v}_{d+1}^2=0.
 \]
Thus $\lambda_1$ and $\lambda_2$ have the same sign. We claim that $\lambda_1\lambda_2\neq 0$ and $\tilde{u}_{d+1}\tilde{v}_{d+1}\neq 0$. Indeed, if $\lambda_2=0$, then $\tilde{u}_{d+1}=0$. Hence, we obtain $(Q_0)_{1,d+1}=\cdots=(Q_0)_{d+1,d+1}=0$ which contradicts with (\ref{condition of Q}). So, $\lambda_2 \neq 0$. Similarly, we can show $\lambda_1 \neq 0,\; \tilde{u}_{d+1}\neq 0$ and $\tilde{v}_{d+1}\neq 0$. We take $\x:=\uu/\tilde{u}_{d+1}$ and $\y:=\uv/\tilde{v}_{d+1}$, and then $Q_0$ can be rewritten as
\[
Q_0=\lambda_1\tilde{u}_{d+1}^2\x\x^\T-\lambda_2\tilde{v}_{d+1}^2\y\y^\T=c(\x\x^\T-\y\y^\T)
\]
where $c=\lambda_1\tilde{u}_{d+1}^2=\lambda_2\tilde{v}_{d+1}^2 \in \R$ is a constant. Since $\mathbf{T}(Q_0)=0$, it gives that $\mathbf{T}(\x\x^\T)=\mathbf{T}(\y\y^\T)$. We write $\x=(\vx,1)^\top$ and $\y=(\vy,1)^\top$ and then $\M_{\A}(\vx)=\M_{\A}(\vy)$ which implies that $\A$ is not generalized affine phase retrievable.
\end{proof}

\begin{proof}[Proof of Theorem \ref{generic measurements in R} ]
We use $\mathcal{G}_{m,d,r}$ to denote the subset of
\[ (M_1,\vb_1,\ldots,M_m,\vb_m,Q) \in \C^{d\times r} \times \C^r \times \cdots \times \C^{d\times r} \times \C^r\times \C^{(d+1)\times (d+1)},
\]
which satisfies the following property:
\[\begin{array}{l}
  Q^\T=Q,\quad Q_{d+1,d+1}=0,\quad \rank(Q)\le 2, \\
  \mathbf{T}(Q)=0, \quad Q_{1,d+1}^2+\cdots+ Q_{d+1,d+1}^2=1.
\end{array}
\]
The $\mathcal{G}_{m,d,r}$ is a well defined complex affine variety because the defining equations are polynomials in each set of variables. We next consider the dimension of the complex affine variety $\mathcal{G}_{m,d,r}$. To this end, let $\pi_1$  be projections on the first $2m$ coordinates  of $\mathcal{G}_{m,d,r}$, i.e.,
\begin{equation*}
  \pi_1(M_1,\vb_1,\ldots,M_m,\vb_m,Q)=(M_1,\vb_1,\ldots,M_m,\vb_m).
\end{equation*}
Similarly, we can define $\pi_2$ by
\[
\pi_2(M_1,\vb_1,\ldots,M_m,\vb_m,Q)=Q.
\]
We claim that $\pi_2(\mathcal{G}_{m,d,r})=\mathcal{L}_d$ where
\begin{equation*}
  \mathcal{L}_d:=\{Q\in \C^{(d+1)\times (d+1)}: Q^\T=Q,\; Q_{d+1,d+1}=0,\; \rank(Q)\le 2, \; Q_{1,d+1}^2+\cdots+ Q_{d+1,d+1}^2=1\}.
\end{equation*}
Indeed, for any fixed $Q'\in \mathcal{L}_d$, there exist $\{(M'_j,\vb'_j)\}_{j=1}^m \in \C^{d\times r}\times \C^{r}$ satisfying $\mathbf{T}(Q')=0$, because for each $j$ the equation
$\tr((A'_j)^*Q')=0$ is a polynomial for the variables $(M'_j,\vb'_j)$. This implies that $(M'_1,\vb'_1,\ldots,M'_m,\vb'_m,Q')\in \mathcal{G}_{m,d,r}$  and $ \pi_2(M'_1,\vb'_1,\ldots,M'_m,\vb'_m,Q')=Q'$. Thus we have $\pi_2(\mathcal{G}_{m,d,r})=\mathcal{L}_d$. Note that $\mathcal{L}_d \subset \C^{(d+1)\times (d+1)}$ is an affine variety with dimension $2d-1$ and hence $\dimm(\pi_2(\mathcal{G}_{m,d,r}))=2d-1$.

We next consider the dimension of the preimage $\pi_2^{-1}(Q_0)\in \C^{d\times r} \times \C^r \times \cdots \times \C^{d\times r} \times \C^r$ for a fixed nonzero $Q_0\in \mathcal{L}_d$. For each pair $(M_j,\vb_j)\in \C^{d\times r}\times \C^{r}$ , the equation $\tr(A_j^*Q_0)=0$ defines a hypersurface of dimension $dr+r-1$ in $\C^{d\times r} \times \C^r$. Hence, the preimage $\pi_2^{-1}(Q_0)$ has dimension $m(dr+r-1)$. Then, according to \cite[Cor.11.13]{harris2013algebraic}
\begin{eqnarray*}
  \dimm(\mathcal{G}_{m,d,r}) &=& \dimm(\pi_2(\mathcal{G}_{m,d,r}))+\dimm(\pi_2^{-1}(Q_0)) \\
   &=&  m(dr+r-1)+2d-1.
\end{eqnarray*}
If $m\ge 2d$, then
\begin{equation*}
  \dimm(\pi_1(\mathcal{G}_{m,d,r}))\le \dimm(\mathcal{G}_{m,d,r})=m(dr+r-1)+2d-1<m(dr+r).
\end{equation*}
Hence,
\begin{equation*}
  \dimm_\R((\pi_1(\mathcal{G}_{m,d,r}))_\R)\le \dimm(\pi_1(\mathcal{G}_{m,d,r}))<m(dr+r)=\dimm(\R^{m(d\times r)}\times \R^{mr}),
\end{equation*}
which implies that $(\pi_1(\mathcal{G}_{m,d,r}))_\R$ lies in a sub-manifold of $\R^{m(d\times r)}\times \R^{mr}$.
Here, the first inequality follows from \cite{Dan}.
However, Lemma \ref{equ:generic R} implies that $(\pi_1(\mathcal{G}_{m,d,r}))_\R$ contains precisely these $\{(M_1,\vb_1),\ldots,(M_m,\vb_m)\}$ which is not generalized affine phase retrieval for $\R^d$. Hence, we arrive at conclusion.
\end{proof}

\section{Generalized affine phase retrieval for complex signals}
We consider the complex case in this section.
 Then for any $\vx,\vy\in \Cd$, we have
\begin{equation}\label{basic_formula in Cd}
  \|M^*\vx+\vb\|_2^2-  \|M^*\vy+\vb\|_2^2=4\Real\left(\vu^* MM^* \vv+(M\vb)^* \vv\right)
\end{equation}
where $\vu=\frac{1}{2}(\vx+\vy)$ and $\vv=\frac{1}{2}(\vx-\vy)$.
\begin{theorem}\label{equivelant in C}
Suppose that $r\in \Z_{\geq 1}$. Let $\A=\{(M_j,\vb_j)\}_{j=1}^m \subset \C^{d\times r}\times \C^{r}$. Then the followings are equivalent:
\begin{itemize}
  \item[(1)] $\A$ has the generalize affine phase retrieval for $\Cd$.
  \item[(2)] For any $\vu,\vv\in \Cd$ and $\vv\neq 0$, there exists a $j$ with $1\le j\le m$ such that
      $$
      \Real(\vu^* M_jM_j^* \vv+(M_j\vb_j)^* \vv)\neq 0.
      $$
  \item[(3)] Viewing $\M_\A$ as a map $\R^{2d}\rightarrow \R^m$, the real Jacobian of $\M_{\A}(\vx)$ has rank $2d$ for all $\vx \in \R^{2d}$.
\end{itemize}
\end{theorem}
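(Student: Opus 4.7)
The plan is to mirror the proof of Theorem \ref{equivelant in R}: I would prove (1) $\Leftrightarrow$ (2) from the identity (\ref{basic_formula in Cd}) exactly as in the real case, and then derive (2) $\Leftrightarrow$ (3) by an explicit computation of the real Jacobian of $\M_{\A}$, viewed as a map $\R^{2d}\to\R^m$ via the identification $\Cd\cong\R^{2d}$. Note that in contrast to the real case, there is no clean analogue of the spanning condition (3) of Theorem \ref{equivelant in R}, because the bilinear form $\vu^*M_jM_j^*\vv+(M_j\vb_j)^*\vv$ is complex-linear in $\vv$ whereas only its real part appears.

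For (1) $\Leftrightarrow$ (2), I would argue by contraposition in both directions, using (\ref{basic_formula in Cd}) as the translation between differences of measurements and the expression appearing in (2). If (2) fails, choose $\vu,\vv\in\Cd$ with $\vv\neq 0$ making all real parts vanish, and set $\vx=\vu+\vv$, $\vy=\vu-\vv$; identity (\ref{basic_formula in Cd}) immediately gives $\M_{\A}(\vx)=\M_{\A}(\vy)$ with $\vx\neq\vy$, contradicting (1). For the converse, if $\M_{\A}(\vx)=\M_{\A}(\vy)$ for distinct $\vx,\vy$, then $\vu=(\vx+\vy)/2$ and $\vv=(\vx-\vy)/2\neq 0$ forces every real part in (2) to vanish, contradicting (2).

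The main work is (2) $\Leftrightarrow$ (3). Writing $\vx=\vx_1+i\vx_2$ with $\vx_1,\vx_2\in\R^d$ and identifying $\vx$ with $(\vx_1,\vx_2)\in\R^{2d}$, I would compute, for a real direction identified with $\vv\in\Cd$,
\[
\lim_{t\to 0}\frac{\norm{M_j^*(\vx+t\vv)+\vb_j}^2-\norm{M_j^*\vx+\vb_j}^2}{t}=2\Real\big(\vx^* M_jM_j^*\vv+(M_j\vb_j)^*\vv\big).
\]
Thus the real Jacobian of $\M_{\A}$ at $\vx$, regarded as an $\R$-linear operator $\Cd\to\R^m$, acts by $\vv\mapsto 2\big(\Real(\vx^* M_jM_j^*\vv+(M_j\vb_j)^*\vv)\big)_{j=1}^m$. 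It has rank $2d$ if and only if its kernel is trivial, i.e., for every nonzero $\vv\in\Cd$ some $j$ produces a nonzero real part. Since $\vx$ in (3) ranges over all of $\Cd$ exactly as $\vu$ does in (2), the two statements coincide.

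The main obstacle, though minor, is the bookkeeping in the Jacobian computation: one must verify that the $\R$-linear functional $\vv\mapsto 2\Real(\vx^* M_jM_j^*\vv+(M_j\vb_j)^*\vv)$ on $\Cd\cong\R^{2d}$ is exactly the row of the real Jacobian corresponding to the coordinate $f_j(\vx)=\norm{M_j^*\vx+\vb_j}^2$, taking account of the real-and-imaginary split of $\vv$. Once this is in hand, (2) $\Leftrightarrow$ (3) follows by inspection upon substituting $\vu=\vx$.
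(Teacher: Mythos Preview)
Your proposal is correct and follows essentially the same approach as the paper: (1)$\Leftrightarrow$(2) via the identity (\ref{basic_formula in Cd}), and (2)$\Leftrightarrow$(3) by identifying the action of the real Jacobian on a direction $\vv$ with $2\Real(\vx^*M_jM_j^*\vv+(M_j\vb_j)^*\vv)$. The only cosmetic difference is that the paper writes $M_jM_j^*=B_j+iC_j$ and assembles an explicit real $2d\times 2d$ matrix $F_j$ to express the Jacobian in coordinates, whereas you compute the same thing via a directional derivative; the content is identical.
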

\begin{proof}
(1)$\Leftrightarrow$(2). We first shows that (2) $ \Rightarrow$ (1). We assume that (1) does not hold. Then there exist $\vx\neq \vy$ in $\Cd$ such that $\M_{\A}(\vx)=\M_{\A}(\vy)$. From (\ref{basic_formula in Cd}) for all $j$ we have
\[ \norm{M_j^* \vx +\vb_j}^2-\norm{M_j^* \vy +\vb_j}^2=4\Real(\vu^* M_jM_j^* \vv+(M_j\vb)^* \vv)=0.
\]
Note that $\vv\neq 0$, then we conclude a contradiction with (2), which implies (1) holds. The converse also follows from the similar argument.

(2)$\Leftrightarrow$(3). Note that $M_jM_j^*$ is a Hermitian matrix and we can write $M_jM_j^*=B_j+i C_j$ with $B_j,C_j\in \R^{d\times d}$ and $B_j^\T=B_j, C_j^\T=-C_j$. Let
\begin{equation*}
  F_j=\left(
     \begin{array}{cc}
       B_j & -C_j \\
       C_j & B_j \\
     \end{array}
   \right).
\end{equation*}
 Then for any $\vu=\vu_R+i\vu_I \in \Cd$, we have
\begin{equation*}
  \norm{M_j^* \vu +\vb_j}^2=\uu^\T F_j\uu+2\tilde{\mathbf{c}}_j^\T\uu+\vb_j^*\vb_j.
\end{equation*}
where
\begin{equation*}
  \uu=\left[ \begin{array}{c}
  \vu_R \\
  \vu_I \\
\end{array}\right] \quad \text{and} \quad \tilde{\mathbf{c}}_j=\left[
                                                      \begin{array}{c}
                                                        (M_j\vb_j)_R \\
                                                        (M_j\vb_j)_I \\
                                                      \end{array}
                                                    \right].
\end{equation*}
We can calculate the  real Jacobian $J(\vu)$ of the map $\M_{\A}$ at $\vu\in \Cd$ is exactly
\[ J(\vu)=2[F_1\uu+\tilde{\mathbf{c}}_1,\ldots,F_m \uu+\tilde{\mathbf{c}}_m].
\]
For any $\vv=\vv_R+i\vv_I \in \Cd$, we have
\begin{equation}\label{eq:}
  2\Real(\vu^* M_jM_j^* \vv+(M_j\vb_j)^* \vv)=[\vv_R^\T,\vv_I^\T]J_j(\vu),
\end{equation}
where $J_j(\vu)$ denotes the $j$-column of $J(\vu)$, $\vv_R$ and $\vv_I$ denote the real and image part of $\vv$, respectively. Thus it is clear that (2)  and (3) are equivalent.
\end{proof}

\begin{corollary}
Let $ r \in \Z_{\geq 1}$ and $\A=\{(M_j,\vb_j)\}_{j=1}^m \subset \C^{d\times r}\times \C^{r}$. If
$\A$ has generalized affine phase retrievable property for $\Cd$ then
$m\geq 2d+\floor{d/r}$.
\end{corollary}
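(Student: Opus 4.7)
The plan is to mirror the corresponding corollary in the real case almost verbatim, the only new ingredient being the complex-to-real Jacobian computation already carried out in the proof of Theorem \ref{equivelant in C}. I would proceed by contraposition: assume $m\le 2d+\floor{d/r}-1$ and produce $\vu\in \Cd$ at which the $2d\times m$ real Jacobian
\[
J(\vu)\;=\;2\bigl[F_1\uu+\tilde{\mathbf{c}}_1,\ldots,F_m\uu+\tilde{\mathbf{c}}_m\bigr]
\]
has rank strictly less than $2d$, contradicting condition (3) of Theorem \ref{equivelant in C}.

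First I would dispose of the trivial case $r\ge d+1$: here $\floor{d/r}=0$ and the claim reduces to $m\ge 2d$, which is immediate since a $2d\times m$ matrix cannot have rank $2d$ unless $m\ge 2d$.

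For the main case $r\le d$, the key observation is that the $j$-th column of $J(\vu)$ is, up to the factor $2$, the real embedding of the complex vector $M_j(M_j^*\vu+\vb_j)\in \Cd$, and hence it vanishes whenever $M_j^*\vu+\vb_j=0$. Exactly as in the real corollary, I would exhibit $\vu\in \Cd$ satisfying
\[
M_j^*\vu+\vb_j=0,\qquad j=1,\ldots,\floor{d/r}.
\]
This is a system of $r\floor{d/r}\le d$ complex-linear equations in the $d$ complex unknowns, so by a dimension count it admits a common solution. At such a $\vu$ the first $\floor{d/r}$ columns of $J(\vu)$ are zero, whence $\rank J(\vu)\le m-\floor{d/r}$; combined with the required $\rank J(\vu)=2d$, this forces $m\ge 2d+\floor{d/r}$.

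The main obstacle, as in the real case, is this ``simple observation'' that the $\floor{d/r}$ equations admit a simultaneous solution $\vu$. When each $M_j$ has full column rank $r$, the stacked operator $[M_1,\ldots,M_{\floor{d/r}}]^*\colon \Cd\to \C^{r\floor{d/r}}$ is surjective and the system is consistent. Any degenerate configuration with $\rank(M_j)<r$ can be handled via the equivalence $\norm{M_j^*\vu+\vb_j}^2=\|(M_j')^*\vu+\vb_j'\|^2+\mathrm{const}$ for a full-column-rank factorization $M_j'\in \C^{d\times r'}$, a reduction that only sharpens the bound; this parallels the treatment in the real corollary exactly.
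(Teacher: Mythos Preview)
Your argument is essentially the paper's own: proceed by contraposition, locate a point $\vu_0$ at which the first $\floor{d/r}$ contributions vanish, then count the remaining $m-\floor{d/r}\le 2d-1$ real constraints against $2d$ real unknowns. You phrase the last step via the Jacobian rank (condition (3) of Theorem~\ref{equivelant in C}) whereas the paper invokes condition (2) directly, but by the equivalence (2)$\Leftrightarrow$(3) this is purely cosmetic.

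Where you go beyond the paper is in trying to justify the existence of $\vu_0$; the paper simply asserts it as ``a simple observation'' both here and in the real corollary. Your justification, however, does not close the gap: full column rank of each individual $M_j$ does \emph{not} force the stacked map $[M_1,\ldots,M_{\floor{d/r}}]^*$ to be surjective onto $\C^{r\floor{d/r}}$---consider $M_1=M_2$ with $r<d$, where the stacked map has rank $r$ rather than $2r$ and the system $M_1^*\vu=-\vb_1,\ M_2^*\vu=-\vb_2$ is inconsistent whenever $\vb_1\neq\vb_2$. Your rank-reduction remark handles only $\rank(M_j)<r$ and does not touch this case, and there is no ``treatment in the real corollary'' to parallel: the paper leaves the same step unjustified there too. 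So your write-up matches the paper's level of rigor, but the sentence claiming surjectivity should be dropped or replaced by a more careful argument.
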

\begin{proof}
To this end, we just need show that $\A$ does not have generalized affine phase retrievable property for $\C^d$ provided  $m\leq 2d+\floor{d/r}-1$.
A simple observation is that there exists a $\mathbf{u}_0\in \Cd $ such that $M_j^* \mathbf{u}_0+\vb_j=0$ for all $j=1,\ldots, \floor{d/r}$. Fix $\mathbf{u}_0$, the following system are homogeneous linear equations for the variable $\vv_R,\vv_I\in \Rd$:
\begin{equation}\label{eq:realso}
  \Real((M_jM_j^* \mathbf{u}_0+M_j\vb_j)^* \vv)=0,\quad j=\floor{d/r}+1,\ldots,m.
\end{equation}
Note that those equations have $2d$ real variables $\vv_R,\vv_I$, but the number of equations is at most $2d-1$. It means that (\ref{eq:realso}) must have a nontrivial solution $\vv_0\neq 0$.
 Hence, if $m\leq 2d+\floor{d/r}-1$, then there exist ${\mathbf u}_0, \vv_0\in \Cd$ with
 $\vv_0\neq 0$ so that
   $$
      \Real(\vu_0^* M_jM_j^* \vv_0+(M_j\vb_j)^* \vv_0)= 0, \quad \text{for  all } j=1,\ldots,m
   $$
 which contradicts with (2) in Theorem \ref{equivelant in C}.
\end{proof}

\begin{lemma} \label{lemma in C}
Let $\vz_1,\vz_2\in\C^r$ and suppose that $\vb_1,\ldots,\vb_{2r+1} \in \C^r$
satisfy
\begin{equation}\label{eq:deng}
{\rm span}_{\R}\{\vb_2-\vb_1,\ldots,\vb_{2r+1}-\vb_1\}=\C^r.
\end{equation}
Then $\vz_1=\vz_2$  if $\norm{\vz_1+\vb_j}=\norm{\vz_2+\vb_j}$ for all $j=1,\ldots,2r+1$.
\end{lemma}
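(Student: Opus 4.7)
My plan is to mirror the strategy used for Lemma \ref{lemma in R}, but with the key twist that over $\C$ the relevant pairing is the \emph{real} part of the Hermitian inner product, which is exactly why the hypothesis is stated with $\spann_\R$ rather than $\spann_\C$ and why we need $2r+1$ vectors instead of $r+1$.

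First I would write out the squared-norm identity
\[
\|\vz_1+\vb_j\|^2-\|\vz_2+\vb_j\|^2 \;=\; \bigl(\|\vz_1\|^2-\|\vz_2\|^2\bigr) + 2\Real\bigl(\vb_j^*(\vz_1-\vz_2)\bigr),
\]
set $\vz:=\vz_1-\vz_2\in\C^r$ and $t:=(\|\vz_1\|^2-\|\vz_2\|^2)/2\in\R$, and rephrase the hypothesis $\|\vz_1+\vb_j\|=\|\vz_2+\vb_j\|$ as the single system of $2r+1$ real-scalar equations
\[
\Real(\vb_j^*\vz)+t=0,\qquad j=1,\ldots,2r+1.
\]
Subtracting the $j=1$ equation from the others kills $t$ and yields $\Real\bigl((\vb_j-\vb_1)^*\vz\bigr)=0$ for $j=2,\ldots,2r+1$.

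Next I would identify $\C^r$ with $\R^{2r}$ by splitting into real and imaginary parts. Under this identification $(\vw,\vz)\mapsto \Real(\vw^*\vz)$ is exactly the standard real inner product on $\R^{2r}$. The hypothesis \eqref{eq:deng} asserts that the $2r$ vectors $\vb_2-\vb_1,\ldots,\vb_{2r+1}-\vb_1$ span $\C^r\cong\R^{2r}$ over $\R$; since there are exactly $2r$ of them, they form an $\R$-basis of $\R^{2r}$. The $2r$ equations above therefore say that $\vz$ is $\R$-orthogonal to every element of a basis, forcing $\vz=0$, i.e.\ $\vz_1=\vz_2$, which is the conclusion.

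I do not anticipate a serious obstacle here: the entire argument is a direct real-linear-algebra computation once one correctly interprets $\Real(\vw^*\vz)$ as the induced $\R$-bilinear pairing on $\R^{2r}$. The only subtle point worth flagging in the write-up is why we need $2r+1$ affine measurements rather than $r+1$, namely that a complex-linear span of dimension $r$ has real dimension $2r$, so one needs $2r$ difference vectors to span $\C^r$ over $\R$ and thereby eliminate $\vz$ together with the auxiliary scalar $t$.
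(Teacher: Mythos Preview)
Your proposal is correct and follows essentially the same approach as the paper: both expand the squared norms to obtain the real-linear system $\Real(\vb_j^*\vz)+t=0$ (equivalently $\vb_{j,R}^\T\vz_R+\vb_{j,I}^\T\vz_I+t=0$), then use the $\R$-spanning hypothesis to force $\vz=0$. The only cosmetic difference is that you first subtract the $j=1$ equation to eliminate $t$ and argue via orthogonality to an $\R$-basis, whereas the paper assembles the full $(2r+1)\times(2r+1)$ matrix and shows it has full rank.
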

\begin{proof}
We set $\vz_R:=\vz_{1,R}-\vz_{2,R} \in \R^r$, $\vz_I:=\vz_{1,I}-\vz_{2,I} \in \R^r$ and $t:=(\norm{\vz_1}^2-\norm{\vz_2}^2)/2$. Then $\norm{\vz_1+\vb_j}=\norm{\vz_2+\vb_j}$ implies that $\vb_{j,R}^\T \vz_R+\vb_{j,I}^\T \vz_I+t=0$ for all $j=1,\ldots,2r+1$. The (\ref{eq:deng}) implies that the rank of the matrix
 \[ A=\left[
      \begin{array}{ccc}
        \vb_{1,R}^\T& \vb_{1,I}^\T  & 1 \\
         \vdots & \vdots& \vdots \\
        \vb_{2r+1,R}^\T & \vb_{2r+1,I}^\T& 1 \\
      \end{array}
    \right]
 \]
 is $2r+1$. And hence $A[\vz_R^\T,\vz_I^\T,t]^\T=0$ has only zero solution which means that $\vz_1=\vz_2$.
\end{proof}

Next, we will show that the bound $m\ge 2d+\floor{d/r}$ is tight provided $r\mid d$.

\begin{theorem} \label{th:lower in C}
Suppose that $m\ge 2d+\floor{d/r}+\epsilon_{d,r}$ where $\epsilon_{d,r}=0$ if $d/r\in \Z$ and $1$ if $d/r\notin \Z$. There exists $\A=\{(M_j,\vb_j)\}_{j=1}^m \subset \C^{d\times r}\times \C^{r}$ which has generalized affine phase retrieval property for $\Cd$.
\end{theorem}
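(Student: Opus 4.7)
The plan is to mirror the construction used in Theorem \ref{th:lower} for the real case, substituting Lemma \ref{lemma in C} in place of Lemma \ref{lemma in R}. The arithmetic works out cleanly: in the real case each coordinate block of size $r$ was handled with $r+1$ measurements, whereas in the complex case Lemma \ref{lemma in C} tells us $2r+1$ measurements per block suffice. Summed over $\floor{d/r}$ blocks this gives $(2r+1)\floor{d/r} = 2d + \floor{d/r}$ when $r \mid d$, and if $r \nmid d$ one additional block of $2(d-r\floor{d/r})+1$ measurements handles the leftover $d - r\floor{d/r}$ complex coordinates, producing the extra $+1$ captured by $\epsilon_{d,r}$.

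Concretely, I would partition $\{1,\ldots,d\}$ into the same blocks
\[
T_t := \{(t-1)r+1,\ldots,tr\},\quad t=1,\ldots,\floor{d/r},
\]
together with the possibly empty remainder $T_{\floor{d/r}+1} := \{r\floor{d/r}+1,\ldots,d\}$. For each $t = 1,\ldots,\floor{d/r}$ assign a group of $2r+1$ indices $j$ and take $M_j \in \C^{d\times r}$ to be the matrix whose rows indexed by $T_t$ form the $r\times r$ identity and whose remaining rows are zero, so that $M_j^*\vx = \vx_{T_t}$. Within this group choose the $2r+1$ vectors $\vb_j \in \C^r$ to be $\vb'_1,\ldots,\vb'_{2r+1}$ satisfying the spanning hypothesis $\mathrm{span}_{\R}\{\vb'_2-\vb'_1,\ldots,\vb'_{2r+1}-\vb'_1\} = \C^r$ of Lemma \ref{lemma in C}. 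The existence of such vectors is immediate, since $\C^r$ is a $2r$-dimensional real vector space.

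By Lemma \ref{lemma in C}, the $2r+1$ measurements from the $t$-th group uniquely determine $\vx_{T_t}$, and doing this for every $t = 1,\ldots,\floor{d/r}$ recovers all coordinates of $\vx$ when $r \mid d$, using exactly $(2r+1)\floor{d/r} = 2d + \floor{d/r}$ measurements. If $r \nmid d$, apply the identical construction to the remaining block $T_{\floor{d/r}+1}$ of size $s := d - r\floor{d/r} < r$: pad $M_j$ with $r-s$ extra zero columns (or equivalently take $M_j \in \C^{d\times r}$ with an $s\times s$ identity in the $T_{\floor{d/r}+1}$ rows and zeros elsewhere) and use $2s+1$ vectors $\vb_j \in \C^r$ whose first $s$ coordinates satisfy the Lemma \ref{lemma in C} spanning condition in $\C^s$. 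This adds $2s+1$ measurements for a total of $(2r+1)\floor{d/r} + 2s + 1 = 2d + \floor{d/r} + 1$.

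No real obstacle is expected: everything reduces to the block-diagonal construction plus Lemma \ref{lemma in C}, and the main task is simply to verify the index bookkeeping and that padding $M_j$ with zero columns does not interfere with applying the lemma to the nonzero block. The only subtlety worth flagging is that when $r > s$ the extra zero columns of $M_j$ merely append constant entries (from the corresponding entries of $\vb_j$) to the affine image, which do not affect the norm on that block beyond an additive constant independent of $\vx$; this constant is absorbed by redefining the effective $\vb$-vector in Lemma \ref{lemma in C}, so the argument goes through unchanged.
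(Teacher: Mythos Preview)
Your proposal is correct and follows essentially the same approach as the paper's own proof: the same block partition $T_1,\ldots,T_{\floor{d/r}+1}$, the same choice of $M_j$ with an identity in the active block and zeros elsewhere, and the same appeal to Lemma~\ref{lemma in C} with $2r+1$ shift vectors per block (plus $2s+1$ for the remainder). The only place where you are slightly more explicit than the paper is in noting that the extra $r-s$ zero columns in the last block contribute only an $\vx$-independent additive constant, which is exactly the right observation to justify applying Lemma~\ref{lemma in C} in dimension $s$ there.
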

\begin{proof}
We set
\[
T_t:=\{(t-1)r+1,\ldots, tr\},\quad t=1,\ldots,\floor{\frac{d}{r}}
\]
and
\[
T_{\floor{\frac{d}{r}}+1}:=\left\{ r\floor{\frac{d}{r}}+1,\ldots,d\right\}.
\]
We first consider the case where ${d}/{r}$ is an integer with $T_{\floor{\frac{d}{r}}+1}=\emptyset$. Similarly to the real case, for $\vx\in \C^d$, set $\vx_{T_t}:=\vx{\mathbb I}_{T_t}$ where ${\mathbb I}_{T_t}$  denotes the indicator function of the set $T_t$. Let $(M_j)_{T_t}\in \C^{r\times r}$ denote a submatrix of $M_j\in \C^{d\times r}$ with row  indexes in $T_t$.
We assume that $(M_j,\vb_j), j=1,\ldots,m,$ satisfy the following conditions:
\begin{enumerate}[(i)]
\item The matrix $(M_j)_{T_t}=I_r$  and $M_j\setminus (M_j)_{T_t}$ is a zero matrix for $j=(t-1)(2r+1)+1,\ldots,t(2r+1)$ and $t=1,\ldots,\floor{d/r}$, where $I_r $ is ${r\times r}$ the identity matrix.
\item
 Set $\vb_{(t-1)(2r+1)+k}=\vb_k'$ for $k=1,\ldots,2r+1, t=1,\ldots, \floor{d/r}$. The vectors $\vb'_1,\ldots,\vb'_{2r+1}\in \C^r$ satisfy ${\rm span}_{\R}\{\vb'_2-\vb'_1,\vb'_3-\vb'_1,\ldots,\vb'_{2r+1}-\vb'_1\}=\C^r$.
\end{enumerate}
Then based on Lemma \ref{lemma in C}, for each $t=1,\ldots,\floor{d/r}$, we can recover $\vx_{T_t}$ from $\norm{M_j^* \vx+\vb_j}, j=(t-1)(2r+1)+1,\ldots,t(2r+1)$. Hence, when $d/r\in \Z$, we can recover  $\vx=\vx_{T_1}+\cdots+\vx_{T_{(d/r+1)}}$ from $\|M_j^*\vx+\vb_j\|_2, j=1,\ldots,m$ where $m=(2r+1)\floor{d/r}=2d+\floor{d/r}$.

When $d/r$ is not an integer, we need consider the recovery of  $\vx_{T_{\floor{d/r}+1}}$. Note that $\#T_{\floor{d/r}+1}=d-r\floor{d/r}$. Similar as before, we can construct matrix $M_{j}\in \C^{d\times r}, $ and $\vb_j\in \C^r, j=\floor{d/r}(2r+1)+1,\ldots,\floor{d/r}(2r+1)+2d-2r\floor{d/r}+1$ so that one can recover $\vx_{T_{\floor{d/r}+1}}$ from $\norm{M_{j}^* \vx+\vb_j}, j=\floor{d/r}(2r+1)+1,\ldots,\floor{d/r}(2r+1)+2d-2r\floor{d/r}+1 $. Combining the measurement matrices above, we obtain the measurement number $m=\floor{d/r}(2r+1)+2d-2r\floor{d/r}+1=2d+\floor{d/r}+1$ is sufficient to recover $\vx$ provided $d/r$ is not an integer.
\end{proof}

Similar to the real case, the set of ${\mathcal A}\in \C^{m(d\times r)}\times \C^{mr}$ which can do generalized affine phase retrieval is not an open set.
\begin{theorem}
Let $ r\in \Z_{\geq 1}$ and  $m\geq 2d+\floor{\frac{d}{r}}+\epsilon_{d,r}$ where $\epsilon_{d,r}=0$ if $d/r\in \Z$ and $1$ if $d/r\notin \Z$. Then the set of generalized affine phase retrieval $\{(M_1,\vb_1),\ldots,(M_m,\vb_m)\} \in \C^{m(d\times r)}\times \C^{mr}$ is not an open set.
\end{theorem}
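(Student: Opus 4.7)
The plan is to imitate the proof of the analogous real theorem: it suffices to exhibit a single $\A=\{(M_j,\vb_j)\}_{j=1}^m\in \C^{m(d\times r)}\times \C^{mr}$ that has generalized affine phase retrieval property for $\C^d$, together with an arbitrarily small perturbation $\widetilde{\A}$ that does not. I would first handle the case $r=d$ (where the bound becomes $m\ge 2d+1$), then reduce the case $r\le d-1$ to it by a block construction.

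For the case $r=d$, without loss of generality take $m=2d+1$ (padding with zero pairs $(M_j,\vb_j)=(0,0)$ if $m>2d+1$, which does not disturb anything). Set $M_j:=I_d$ for $j=1,\ldots,2d+1$, and choose $\vb_1,\ldots,\vb_{2d+1}\in \C^d$ with
$\spann_\R\{\vb_2-\vb_1,\ldots,\vb_{2d+1}-\vb_1\}=\C^d$, additionally imposing $b_{j,1}=0$ for $j\ge 2$. By Lemma \ref{lemma in C}, this $\A$ has the generalized affine phase retrieval property on $\C^d$. Now define the perturbation $\widetilde{M}_1:=I_d+\delta\,\overline{b_{1,1}}\,E_{21}$, where $E_{21}$ has a $1$ in entry $(2,1)$ and zeros elsewhere, and set $\widetilde{M}_j:=M_j$ for $j\ge 2$. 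Then $\normf{M_j-\widetilde{M}_j}\le\delta\abs{b_{1,1}}$, which is smaller than any prescribed $\epsilon$ for $\delta$ small.

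To show $\widetilde{\A}$ is not generalized affine phase retrievable, take any $a\in\C\setminus\{0\}$ and let $\vx:=(a,-1/\delta,0,\ldots,0)^\T$, $\vy:=(-a,-1/\delta,0,\ldots,0)^\T$. For $j\ge 2$, since $\widetilde{M}_j^*=I_d$ and $b_{j,1}=0$, the only difference between $\vx+\vb_j$ and $\vy+\vb_j$ sits in the first coordinate, where the two values are $\pm a$; hence their squared norms coincide. For $j=1$, since $\widetilde{M}_1^*=I_d+\delta\, b_{1,1}\,E_{12}$, the first coordinate of $\widetilde{M}_1^*\vx+\vb_1$ is $a+\delta b_{1,1}(-1/\delta)+b_{1,1}=a$, and of $\widetilde{M}_1^*\vy+\vb_1$ is $-a$; the remaining coordinates of the two vectors are identical. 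Hence $\norm{\widetilde{M}_1^*\vx+\vb_1}=\norm{\widetilde{M}_1^*\vy+\vb_1}$, so $\vx\ne\vy$ yield identical measurements.

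For $r\le d-1$, repeat the argument on the first block. Partition $\{1,\ldots,d\}$ as in the proof of Theorem \ref{th:lower in C}, require $\{(M_j,\vb_j)\}_{j=1}^m$ to satisfy conditions (i)--(ii) of that theorem with the extra stipulation $b_{j,1}=0$ for $j\ge 2$. Then perturb only the block $(M_1)_{T_1}=I_r$ to $(\widetilde{M}_1)_{T_1}:=I_r+\delta\,\overline{b_{1,1}}\,E_{21}\in\C^{r\times r}$, and choose $\vx,\vy\in\C^d$ supported on $T_1$ to coincide with the two-dimensional pair built above (padded by zeros outside $T_1$). The measurements associated to blocks $T_t$ with $t\ge 2$ annihilate the $T_1$-part of $\vx$ and $\vy$, while the verification on the $T_1$ block is identical to the $r=d$ case. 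The main subtlety — choosing the perturbation magnitude to be $\delta\,\overline{b_{1,1}}$ rather than $\delta\, b_{1,1}$ — is precisely what cancels the cross term arising from conjugation in $\|\widetilde{M}_1^*\vx+\vb_1\|^2$; once this is in place, everything else is a routine adaptation of the real argument.
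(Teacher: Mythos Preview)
There is a genuine gap in your construction for $r=d$: the two requirements you impose on $\vb_1,\ldots,\vb_{2d+1}$ are mutually incompatible. If $b_{j,1}=0$ for every $j\ge 2$, then the first complex coordinate of each difference $\vb_j-\vb_1$ equals the \emph{same} number $-b_{1,1}$. Projecting the real span $\spann_\R\{\vb_j-\vb_1:j\ge 2\}$ onto the first coordinate therefore lands in $\spann_\R\{-b_{1,1}\}\subset\C$, which has real dimension at most $1$. Since the kernel of this projection is $\{0\}\times\C^{d-1}$ of real dimension $2d-2$, the full real span has dimension at most $2d-1<2d$ and so cannot equal $\C^d$. (This is precisely where the complex case departs from the real one: over $\R$ the first coordinate contributes only one real dimension and the analogous count gives $\le d$, which is tight and attainable.) Hence Lemma~\ref{lemma in C} does not apply, and you have not shown that your unperturbed $\A$ has the generalized affine phase retrieval property. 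The same obstruction carries over verbatim to the block construction for $r\le d-1$.

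The paper circumvents this by \emph{not} forcing all $b_{j,1}$ with $j\ge 2$ to vanish: it takes the explicit family $\vb_j=i\ve_j$ for $1\le j\le d$, $\vb_j=\ve_{j-d}$ for $d+1\le j\le 2d$, $\vb_{2d+1}=0$, so that in particular $b_{d+1,1}=1$. To make the $j\ge 2$ checks go through it then fixes the first coordinates of the test vectors to be $\pm i$, exploiting $|\pm i+0|=1$ and $|\pm i+1|=\sqrt{2}$. With this choice your rank-one perturbation $\delta\,\overline{b_{1,1}}E_{21}$ no longer cancels the $j=1$ term, and the paper instead uses the Hermitian perturbation $\widetilde{M}_1=I_d+i\delta E_{12}-i\delta E_{21}$ together with $\vx=(i,-\tfrac{1}{2\delta},0,\ldots,0)^\T$, $\vy=(-i,-\tfrac{1}{2\delta},0,\ldots,0)^\T$; the equality $\norm{\widetilde{M}_1^*\vx+\vb_1}=\norm{\widetilde{M}_1^*\vy+\vb_1}$ then comes from a genuine two-coordinate cancellation rather than a single-entry sign flip. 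So while your overall strategy (exhibit one good $\A$ and a nearby bad $\widetilde{\A}$) is the right one, the specific choice of $\vb_j$, test vectors, and perturbation all need to be reworked along these lines.
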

\begin{proof}
 We only need to find a measurement set $\{(M_1,\vb_1),\ldots,(M_m,\vb_m)\}\in \C^{m(d\times r)}\times \C^{mr}$ which has generalized affine phase retrieval property for $\C^d$, but for any $\epsilon>0$ there exists a small perturbation measurement set  $\{(\widetilde{M}_1,\vb_1),\ldots,(\widetilde{M}_m,\vb_m)\} \in \C^{m(d\times r)}\times \C^{mr}$ with $\normf{M_j-\widetilde{M}_j}\le \epsilon$ which is not generalized affine phase retrievable.

 We first consider the case where $r=d$.
 Without loss of generality we only need to consider the case $m=2d+1$ (for the case where $m>2d+1$, we just take $(M_j,\vb_j)=({\mathbf 0},{\mathbf 0})$ for $j=2d+2,\ldots,m$). Set $M_j:=I_d, j=1,\ldots,2d+1,$ and
 \begin{equation} \label{the structure of b_j}
 \vb_j=\left\{
        \begin{array}{cl}
          i\ve_j & j=1,\ldots,d \\
          \ve_j & j=d+1,\ldots,2d \\
          0 & j=2d+1 \\
        \end{array}
      \right.,
\end{equation}
where $\{\ve_1,\ldots,\ve_d\}$ is the canonical basis vectors in $\Cd$, i.e. the $j$th entry of $\ve_j$ is $1$ and other entries are $0$. A simple observation is that that $\vb_1,\ldots,\vb_{2d+1}\in \Cd$ satisfy
\[
{\rm span}_{\R}\{\vb_2-\vb_1,\ldots,\vb_{2d+1}-\vb_1\}=\Cd.
 \]
According to Lemma \ref{lemma in R}, the measurement set $\{(M_j,\vb_j)\}_{j=1}^{2d+1}$ has generalized affine phase retrievable property for $\C^d$.

 We perturb $M_1$ to $\widetilde{M}_1=I_d+i\delta E_{12}-i\delta E_{21}$, where $E_{12}$ denotes the matrix with $(1,2)$-th entry being $1$ and all other entries being $0$ and $\delta>0$. Furthermore, we let $\widetilde{M}_j=M_j$ for $j=2,\ldots,2d+1$. Then $\{(\widetilde{M}_j,\vb_j)\}_{j=1}^m \subset \C^{d\times r}\times \C^{r}$ is not generalized affine phase retrievable. To see this, we let $\vx=(i,-\frac{1}{2\delta},0,\ldots,0)^\T$ and $\vy=(-i,-\frac{1}{2\delta},0,\ldots,0)^\T$. It is easy to check that
\begin{equation*}
  \norm{\widetilde{M}_j^* \vx+\vb_j}=\norm{\widetilde{M}_j^* \vy+\vb_j} \quad j=1,\ldots,2d+1.
\end{equation*}
By taking $\delta$ sufficiently small, we will have $\normf{M_j-\widetilde{M}_j}\le \epsilon$, which complete the proof for the case where $r=d$.

We next consider the case where $r\leq d-1$. Using the notations in Theorem \ref{th:lower in C}, we set
\[
T_t:=\{(t-1)r+1,\ldots, tr\},\quad t=1,\ldots,\floor{\frac{d}{r}}
\]
and
\[
T_{\floor{\frac{d}{r}}+1}:=\left\{ r\floor{\frac{d}{r}}+1,\ldots,d\right\}.
\]
For $m= 2d+\floor{\frac{d}{r}}+\epsilon_{d,r}$, we require that $\{(M_j,\vb_j)\}_{j=1}^m$ satisfy the conditions (i) and (ii) in the proof of Theorem \ref{th:lower in C}, i.e.,
 \begin{enumerate}[(i)]
\item The matrix $(M_j)_{T_t}=I_r$  and $M_j\setminus (M_j)_{T_t}$ is a zero matrix for $j=(t-1)(2r+1)+1,\ldots,t(2r+1)$ and $t=1,\ldots,\floor{d/r}$, where $I_r $ is ${r\times r}$ the identity matrix.
\item
 Set $\vb_{(t-1)(2r+1)+k}=\vb_k'$ for $k=1,\ldots,2r+1, t=1,\ldots, \floor{d/r}$. The vectors $\vb'_1,\ldots,\vb'_{2r+1}\in \C^r$ satisfy ${\rm span}_{\R}\{\vb'_2-\vb'_1,\vb'_3-\vb'_1,\ldots,\vb'_{2r+1}-\vb'_1\}=\C^r$.
\end{enumerate}
  Particularly, we require that $\vb'_1,\ldots,\vb'_{2r+1}\in \C^r$ are similarly defined by (\ref{the structure of b_j}). Note that $(M_1)_{T_1}=I_r$. Similar as before, we perturb $(M_1)_{T_1}$ to $(\widetilde{M}_1)_{T_1}=I_r+i\delta  E_{12}-i\delta E_{21}$ and $\widetilde{M}_j=M_j, j=2,\ldots,m$. Then $\{(\widetilde{M}_j,\vb_j)\}_{j=1}^m$ does not have affine phase retrieval property but we will have $\normf{M_j-\widetilde{M}_j}\le \epsilon$ by taking $\delta$ sufficiently small, which completes the proof for $r\leq d-1$.

\end{proof}

\begin{theorem}\label{generic measurements in C}
Let $ r \in \Z_{\geq 1}$ and  $m\ge 4d-1$. Then a generic $\{(M_1,\vb_1),\ldots,(M_m,\vb_m)\} \in \C^{m(d\times r)}\times \C^{mr}$ has generalized affine phase retrieval property for $\Cd$.
\end{theorem}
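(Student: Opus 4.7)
My plan is to mirror the argument of Theorem \ref{generic measurements in R}, replacing complex symmetric matrices by Hermitian ones and carrying out the dimension count over $\R$ rather than $\C$. The first step is to establish the complex analog of Lemma \ref{equ:generic R}: $\A=\{(M_j,\vb_j)\}_{j=1}^m\subset\C^{d\times r}\times\C^r$ fails the generalized affine phase retrieval property for $\C^d$ if and only if there exists a nonzero Hermitian matrix $Q\in\C^{(d+1)\times(d+1)}$ satisfying
\[
Q^*=Q,\quad Q_{d+1,d+1}=0,\quad \rank(Q)\leq 2,\quad \mathbf{T}(Q)=0,\quad |Q_{1,d+1}|^2+\cdots+|Q_{d,d+1}|^2=1.
\]
The argument runs in parallel with Lemma \ref{equ:generic R}: given $\vx\neq\vy$ with $\M_\A(\vx)=\M_\A(\vy)$, take $Q=\lambda(\x\x^*-\y\y^*)$ with $\lambda=1/\norm{\vx-\vy}^2$ and verify all five conditions. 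Conversely, the Hermitian spectral decomposition yields $Q=\lambda_1\uu\uu^*-\lambda_2\uv\uv^*$ with $\lambda_1,\lambda_2\in\R$ and orthonormal $\uu,\uv\in\C^{d+1}$; the conditions $Q_{d+1,d+1}=0$ and $Q\neq 0$ together with the normalization force $\lambda_1\lambda_2>0$ and $\tilde u_{d+1},\tilde v_{d+1}\neq 0$, so setting $\x:=\uu/\tilde u_{d+1}$ and $\y:=\uv/\tilde v_{d+1}$ produces distinct pre-images of $\M_\A$.

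Next I define the real algebraic incidence set
\[
\mathcal{G}_{m,d,r}\subset \bigl(\C^{d\times r}\times\C^r\bigr)^m \times \mathcal{H}_{d+1}
\]
consisting of tuples $(M_1,\vb_1,\ldots,M_m,\vb_m,Q)$ satisfying the five conditions, where $\mathcal{H}_{d+1}$ denotes the real vector space of $(d+1)\times(d+1)$ Hermitian matrices. Once one decomposes complex entries into real and imaginary parts, every defining relation becomes a real polynomial, so $\mathcal{G}_{m,d,r}$ is a real algebraic set. Writing $\pi_1,\pi_2$ for projection onto the $\A$- and $Q$-coordinates, I estimate $\dimm_\R\pi_1(\mathcal{G}_{m,d,r})$ through the $Q$-fibration. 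The image $\pi_2(\mathcal{G}_{m,d,r})$ is contained in $\mathcal{L}_d^{\C}:=\{Q\in\mathcal{H}_{d+1}:Q_{d+1,d+1}=0,\,\rank(Q)\leq 2,\,\sum_{i=1}^d|Q_{i,d+1}|^2=1\}$. Parametrising Hermitian rank-$2$ matrices by an orthonormal pair in $\C^{d+1}$ (modulo the diagonal $U(1)^2$ action) together with two real eigenvalues gives real dimension $4d$ for the top stratum; the two remaining real scalar constraints drop this to $\dimm_\R\mathcal{L}_d^{\C}=4d-2$.

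For a generic $Q_0\in\mathcal{L}_d^{\C}$, the equation $\tr(A_jQ_0)=0$ is a non-trivial real-quadratic polynomial equation in $(M_j,\vb_j)\in\C^{d\times r}\times\C^r$, defining a real hypersurface of dimension $2(dr+r)-1$; since the $m$ equations decouple across $j$, the fiber has real dimension $2m(dr+r)-m$, so
\[
\dimm_\R\mathcal{G}_{m,d,r}\leq (4d-2)+2m(dr+r)-m.
\]
When $m\geq 4d-1$, the right-hand side is strictly less than $2m(dr+r)=\dimm_\R(\C^{m(d\times r)}\times\C^{mr})$, so $\pi_1(\mathcal{G}_{m,d,r})$ is a proper real algebraic subset of the parameter space; every $\A$ outside it is generalized affine phase retrievable for $\C^d$, and the theorem follows. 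The main technical hurdle is the dimension count for the Hermitian rank-$\leq 2$ determinantal variety, which requires stratifying by rank (verifying that rank-$0$ and rank-$1$ strata are of strictly smaller real dimension) and checking that the exceptional locus of $Q_0$ on which some equation $\tr(A_jQ_0)=0$ fails to be hypersurface-defining is itself lower-dimensional, so that the top-dimensional contribution controls the overall bound. A secondary reason the argument proceeds over $\R$ rather than $\C$ as in Theorem \ref{generic measurements in R} is that Hermiticity and the normalisation $\sum_{i}|Q_{i,d+1}|^2=1$ are not complex-algebraic conditions, ruling out the direct complex projective dimension-counting used in the real case.
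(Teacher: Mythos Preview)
Your proposal is correct in its dimension count and overall architecture, but it takes a genuinely different technical route from the paper. The paper does \emph{not} work directly over $\R$. Instead it complexifies: it writes $M_j=U_j+iV_j$, $\vb_j=\mathbf{c}_j+i\mathbf{d}_j$ with $U_j,V_j,\mathbf{c}_j,\mathbf{d}_j$ now ranging over $\C$, and replaces the Hermitian matrix $Q$ by $X+iY$ with $X\in\C_{\mathrm{sym}}^{(d+1)\times(d+1)}$, $Y\in\C_{\mathrm{skew}}^{(d+1)\times(d+1)}$. Crucially, the normalisation is written as $\sum_i Q_{i,d+1}Q_{d+1,i}=1$, which is a polynomial in the entries of $Q$ (and agrees with your $\sum_i|Q_{i,d+1}|^2=1$ when $Q$ is Hermitian). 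With these substitutions $\mathcal{G}_{m,d,r}$ becomes a genuine \emph{complex} affine variety, so the paper can invoke the fibre-dimension formula from complex algebraic geometry (Harris, Cor.~11.13) cleanly, obtaining $\dim_\C\mathcal{G}_{m,d,r}=m(2dr+2r-1)+4d-2$, and only at the very end passes to real points via $\dim_\R((\pi_1\mathcal{G})_\R)\le\dim_\C(\pi_1\mathcal{G})$.

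Your approach stays over $\R$ throughout, treating $\mathcal{G}_{m,d,r}$ as a real algebraic set and arriving at the same numerical bound $(4d-2)+m(2(dr+r)-1)$. What the paper's complexification buys is that the delicate points you flag---stratifying the Hermitian rank-$\le2$ locus, checking that the exceptional $Q_0$ for which $\tr(A_jQ_0)=0$ fails to cut a hypersurface form a lower-dimensional set, and handling the fibre-dimension formula---are all absorbed into standard complex-variety machinery. What your approach buys is conceptual directness: no auxiliary complex variables, and the Hermitian structure stays visible. One small correction: $\pi_1(\mathcal{G}_{m,d,r})$ in your setup is only \emph{semi}-algebraic (Tarski--Seidenberg), not a real algebraic subset as you write; this is harmless, since a lower-dimensional semi-algebraic set is still nowhere dense and of measure zero, which is all ``generic'' requires.
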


To this end, we introduce some lemmas.
\begin{lemma}\label{equ:generic C}
Suppose that $r\in \Z_{\geq 1}$.  Then $\A=\{(M_1,\vb_1),\ldots,(M_m,\vb_m)\} \in \C^{m(d\times r)}\times \C^{mr}$ is not generalized affine phase retrievable if and only if there exists nonzero $Q\in \C^{(d+1)\times (d+1)}$ satisfies
\begin{equation} \label{condition of Q in C}
\begin{array}{l}
  Q^*=Q,\quad Q_{d+1,d+1}=0,\quad \rank(Q)\le 2,\quad \mathbf{T}(Q)=0, \\
   Q_{1,d+1}\cdot Q_{d+1,1}+\cdots+ Q_{d,d+1}\cdot Q_{d+1,d}=1,
\end{array}
\end{equation}
where the linear operator $\mathbf{T}$ is defined in (\ref{the map T}).
\end{lemma}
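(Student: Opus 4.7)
The plan is to mirror the structure of Lemma \ref{equ:generic R}, with careful attention to the Hermitian (rather than symmetric) structure and the corresponding modification of the normalization condition.

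For the forward direction, I would suppose $\A$ is not generalized affine phase retrievable, so there exist $\vx\neq\vy\in\C^d$ with $\M_\A(\vx)=\M_\A(\vy)$. Setting $\x=(\vx,1)^\T$ and $\y=(\vy,1)^\T$, the identity $\norm{M_j^*\vx+\vb_j}^2=\tr(A_j\x\x^*)$ gives $\mathbf{T}(\x\x^*-\y\y^*)=0$. I would then define $Q:=\lambda(\x\x^*-\y\y^*)$; this matrix is automatically Hermitian, of rank at most $2$, has zero $(d+1,d+1)$ entry (since $|1|^2-|1|^2=0$), and lies in the kernel of $\mathbf{T}$. Choosing the real scalar $\lambda$ amounts to normalizing $\sum_{j=1}^d Q_{j,d+1}Q_{d+1,j}=\sum_{j=1}^d|Q_{j,d+1}|^2=\lambda^2\|\vx-\vy\|^2$ to equal $1$, so $\lambda=1/\|\vx-\vy\|$ works.

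For the converse, suppose $Q$ satisfies (\ref{condition of Q in C}). Since $Q$ is Hermitian of rank at most $2$, spectral decomposition gives $Q=\mu_1\uu\uu^*+\mu_2\uv\uv^*$ with $\mu_1,\mu_2\in\R$ and orthonormal $\uu,\uv\in\C^{d+1}$. The constraint $Q_{d+1,d+1}=0$ reads $\mu_1|\tilde u_{d+1}|^2+\mu_2|\tilde v_{d+1}|^2=0$. I would first rule out the degenerate cases: if $\mu_2=0$, the equation forces $\tilde u_{d+1}=0$, whence $Q_{j,d+1}=\mu_1\tilde u_j\overline{\tilde u_{d+1}}=0$ for all $j$, contradicting $\sum_j|Q_{j,d+1}|^2=1$; similarly for $\mu_1=0$ and for $\tilde u_{d+1}=0$ or $\tilde v_{d+1}=0$ individually. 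Hence $\mu_1,\mu_2$ are nonzero of opposite sign, and writing $\mu_1=\lambda_1>0$, $\mu_2=-\lambda_2<0$, we get $\lambda_1|\tilde u_{d+1}|^2=\lambda_2|\tilde v_{d+1}|^2=:c>0$.

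Finally I would set $\x:=\uu/\tilde u_{d+1}$ and $\y:=\uv/\tilde v_{d+1}$, which both have $(d+1)$-th entry equal to $1$. A direct computation gives $\lambda_1\uu\uu^*=c\,\x\x^*$ and $\lambda_2\uv\uv^*=c\,\y\y^*$, so $Q=c(\x\x^*-\y\y^*)$. Then $\mathbf{T}(Q)=0$ becomes $\mathbf{T}(\x\x^*)=\mathbf{T}(\y\y^*)$, and writing $\x=(\vx,1)^\T$, $\y=(\vy,1)^\T$ this is $\M_\A(\vx)=\M_\A(\vy)$; the normalization condition prevents $\vx=\vy$, proving failure of generalized affine phase retrievability. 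The main technical point — the only place that differs nontrivially from the real proof — is the conjugate-symmetric bookkeeping: verifying that with $Q$ Hermitian, the rescaling by $\tilde u_{d+1}\in\C$ (not $\R$) still produces $\x\x^*$ cleanly, and that the normalization $\sum_jQ_{j,d+1}Q_{d+1,j}=1$ is the correct Hermitian analogue of the sum-of-squares condition in the real case.
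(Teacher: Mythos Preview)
Your proposal is correct and follows exactly the approach the paper intends: the paper omits the proof entirely, stating only that it is similar to Lemma~\ref{equ:generic R}, and your argument is precisely the natural Hermitian adaptation of that real-case proof (spectral decomposition of the Hermitian $Q$, ruling out degenerate eigenvalues/last entries via the normalization, and rescaling eigenvectors to have last coordinate $1$). One small remark: your normalization constant $\lambda=1/\|\vx-\vy\|$ is the correct one to make $\sum_j|Q_{j,d+1}|^2=1$; the paper's real-case choice $\lambda=1/\|\vx-\vy\|^2$ appears to be a typo, so do not try to match it.
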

The proof of Lemma \ref{equ:generic C} is similar with one of Lemma \ref{equ:generic R}.
We omit the detail here.
To state conveniently, we use $\C_{\text{sym}}^{d\times d} $ to denote the set of symmetric complex $d\times d$ matrices and use $\C_{\text{skew}}^{d\times d} $ to denote  the set of skew-symmetric complex $d\times d$ matrices.

\begin{definition}
Let $\mathcal{G}_{m,d,r}$ denote the set of $(U_1,\mathbf{c}_1,V_1,\mathbf{d}_1,\ldots,U_m,\mathbf{c}_m,V_m,\mathbf{d}_m,X,Y)$
where $U_j,V_j\in \C^{d\times r}, \mathbf{c}_j, \mathbf{d}_j\in \C^r$, $X\in\C_{\mathrm{sym}}^{(d+1)\times(d+1)}, Y\in \C_{\mathrm{skew}}^{(d+1)\times(d+1)}$
 which satisfy  the following properties:
\[\begin{array}{l}
   X_{d+1,d+1}=0,\quad \rank(X+iY)\le 2,\quad \innerp{A_j,X+iY}=0,\quad j=1,\ldots,m \\
   (X_{1,d+1}+iY_{1,d+1})(X_{d+1,1}+iY_{d+1,1})+\cdots+ (X_{d,d+1}+iY_{d,d+1})(X_{d+1,d}+iY_{d+1,d})=1,
\end{array}
\]
where
\begin{equation}\label{the definition of A_j}
A_j=\left(
\begin{array}{cc}
M_jM_j^* & M_j\vb_j \\
 (M_j\vb_j)^* & \vb_j^*\vb_j \\
 \end{array}
  \right),
\end{equation}
 $M_j=U_j+iV_j$ and $\vb_j=\mathbf{c}_j+i\mathbf{d}_j$.

\end{definition}

 Recall that $\rank(X+iY)\le 2 $ is equivalent to the vanishing to all $3\times 3$ minors of $X+iY$. Hence, we can view  $\mathcal{G}_{m,d,r}$ as a  complex affine variety. Next, we consider the dimension of $\mathcal{G}_{m,d,r}$.

\begin{lemma}
The complex affine variety $\mathcal{G}_{m,d,r}$ has dimension $(2dr+2r-1)m+4d-2$.
\end{lemma}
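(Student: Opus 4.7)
The plan is to adapt the projection-and-Harris-dimension argument of Theorem \ref{generic measurements in R} to the larger variety $\mathcal{G}_{m,d,r}$. Let
\[
\pi_2 : \mathcal{G}_{m,d,r}\to\C_{\mathrm{sym}}^{(d+1)\times(d+1)}\times\C_{\mathrm{skew}}^{(d+1)\times(d+1)}
\]
be the coordinate projection onto $(X,Y)$, and let $\pi_1$ be the projection forgetting $(X,Y)$. Then \cite[Cor.~11.13]{harris2013algebraic} gives
\[
\dimm(\mathcal{G}_{m,d,r})=\dimm(\pi_2(\mathcal{G}_{m,d,r}))+\dimm(\pi_2^{-1}(Q_0))
\]
for a generic $Q_0\in \pi_2(\mathcal{G}_{m,d,r})$, and I would compute both summands.

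For the image, use the linear isomorphism $(X,Y)\mapsto Q:=X+iY$ from $\C_{\mathrm{sym}}^{(d+1)\times(d+1)}\times\C_{\mathrm{skew}}^{(d+1)\times(d+1)}$ onto $\C^{(d+1)\times(d+1)}$, under which the defining constraints become
\[
Q_{d+1,d+1}=0,\quad \rank(Q)\le 2,\quad \sum_{k=1}^d Q_{k,d+1}Q_{d+1,k}=1.
\]
The trivial choice $U_j=V_j=0$, $\mathbf{c}_j=\mathbf{d}_j=0$ kills every $A_j$, so $\pi_2$ surjects onto the resulting subvariety $\mathcal{L}_d$. The rank-at-most-$2$ locus in $\C^{(d+1)\times(d+1)}$ has dimension $4(d+1)-4=4d$ (parametrize as $BC^\T$ with $B,C\in\C^{(d+1)\times 2}$, modulo the $4$-dimensional $GL_2$-action), and the two further conditions are independent nontrivial regular functions on that locus, so $\dimm(\mathcal{L}_d)=4d-2$.

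For the preimage over a fixed nonzero $Q_0\in\mathcal{L}_d$, the fibre consists of $m$ decoupled blocks $(U_j,\mathbf{c}_j,V_j,\mathbf{d}_j)\in\C^{d\times r}\times\C^r\times\C^{d\times r}\times\C^r$ of total complex dimension $m(2dr+2r)$, each subject to the single equation $\innerp{A_j,Q_0}=0$. Provided each equation is a nontrivial polynomial in its block of variables, these $m$ hypersurfaces cut the dimension by exactly $m$, yielding $m(2dr+2r-1)$. Plugging into Harris's formula then gives $\dimm(\mathcal{G}_{m,d,r})=(2dr+2r-1)m+4d-2$, as claimed.

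The main obstacle is verifying the nontriviality of $\innerp{A_j,Q_0}$ as a polynomial in $(U_j,\mathbf{c}_j,V_j,\mathbf{d}_j)$. I would handle this by exploiting the complexification: since $U_j,V_j,\mathbf{c}_j,\mathbf{d}_j$ are free complex variables, the pair $(M_j,\overline{M_j})=(U_j+iV_j,U_j-iV_j)$ and analogously $(\mathbf{b}_j,\overline{\mathbf{b}_j})$ are algebraically independent. A direct calculation from (\ref{the definition of A_j}) then shows that $A_j$ attains every matrix of the form $\sum_{k=1}^r \mathbf{v}_k\mathbf{w}_k^\T$ for arbitrary $\mathbf{v}_k,\mathbf{w}_k\in\C^{d+1}$; specializing to the rank-one case $A_j=\mathbf{v}\mathbf{w}^\T$ reduces $\innerp{A_j,Q_0}$ to the bilinear form $\mathbf{w}^\T Q_0\mathbf{v}$, which is identically zero in the free variables $\mathbf{v},\mathbf{w}$ only if $Q_0=0$. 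This contradicts $Q_0\in\mathcal{L}_d$ (whose normalization rules out $Q_0=0$), so each equation is indeed nontrivial and the dimension count is justified.
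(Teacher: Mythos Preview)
Your proposal is correct and follows essentially the same route as the paper: pass from $(X,Y)$ to $Q=X+iY$ via the linear isomorphism, project onto $Q$ to hit the variety $\mathcal{L}_d$ of dimension $4d-2$, note that each fibre is cut out by $m$ independent hypersurface equations in the $(2dr+2r)$-dimensional block of measurement variables, and apply \cite[Cor.~11.13]{harris2013algebraic}. Your version is in fact more careful than the paper's in two places---you justify $\dimm(\mathcal{L}_d)=4d-2$ via the $BC^\T$ parametrisation, and you verify nontriviality of $\innerp{A_j,Q_0}$ by the change of variables $(M_j,N_j,\mathbf{p}_j,\mathbf{q}_j)=(U_j+iV_j,U_j-iV_j,\mathbf{c}_j+i\mathbf{d}_j,\mathbf{c}_j-i\mathbf{d}_j)$---whereas the paper simply asserts both facts.
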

\begin{proof}
Let $\mathcal{G}_{m,d,r}^{'}$ be the set of $(U_1,\mathbf{c}_1,V_1,\mathbf{d}_1,\ldots,U_m,\mathbf{c}_m,V_m,\mathbf{d}_m,Q)$
 where $U_j,V_j\in \C^{d\times r}, \mathbf{c}_j, \mathbf{d}_j\in \C^r$, $Q\in \C^{(d+1)\times (d+1)}$ which satisfy
\[\begin{array}{l}
  Q_{d+1,d+1}=0,\quad \rank(Q)\le 2,\quad \innerp{A_j,Q}=0, j=1,\ldots,m \\
   Q_{1,d+1}\cdot Q_{d+1,1}+\cdots+ Q_{d,d+1}\cdot Q_{d+1,d}=1,
\end{array}
\]
where matrices $A_j$ are defined by (\ref{the definition of A_j}). Note that
 $\mathcal{G}_{m,d,r}^{'}$ is a well defined complex affine variety because the defining equations are polynomials in each set of variables. It is clear that $\mathcal{G}_{m,d,r}$ and $\mathcal{G}_{m,d,r}^{'}$ are linear isomorphic since we can identify $\C_{\text{sym}}^{d\times d}\times \C_{\text{skew}}^{d\times d}$ with $\C^{d\times d}$ by the map $(X,Y)\mapsto X+iY=Q$. Indeed,  any complex   matrix $Q$ can be uniquely written as $Q=X+iY$ where $X=(Q+Q^\T)/2$ is a complex symmetric matrix and $Y=(Q-Q^\T)/2i$ is a complex skew-symmetric matrix. Hence, to this end, it is sufficient to consider the dimension of $\mathcal{G}_{m,d,r}^{'}$.

We let $\pi_1$ and $\pi_2$ be projections on the first $4m$ coordinates and the last coordinate of $\mathcal{G}_{m,d,r}^{'}$, respectively,  i.e.,
\[
  \pi_1(U_1,\mathbf{c}_1,V_1,\mathbf{d}_1,\ldots,U_m,\mathbf{c}_m,V_m,\mathbf{d}_m,Q)=(U_1,\mathbf{c}_1,V_1,\mathbf{d}_1,\ldots,U_m,\mathbf{c}_m,V_m,\mathbf{d}_m)
\]
and
\[
  \pi_2(U_1,\mathbf{c}_1,V_1,\mathbf{d}_1,\ldots,U_m,\mathbf{c}_m,V_m,\mathbf{d}_m,Q)=Q.
\]
We claim that $\pi_2(\mathcal{G}_{m,d,r})=\mathcal{L}_d$ where
\begin{equation*}
  \mathcal{L}_d:=\{Q\in \C^{(d+1)\times (d+1)}: Q_{d+1,d+1}=0,\; \rank(Q)\le 2, \; Q_{1,d+1}\cdot Q_{d+1,1}+\cdots+ Q_{d,d+1}\cdot Q_{d+1,d}=1\}.
\end{equation*}
Indeed, for any fixed $Q'\in \mathcal{L}_d$, there exists $\{(U'_j,\mathbf{c}'_j,V'_j,\mathbf{d}'_j)\}_{j=1}^m \in \C^{d\times r}\times \C^{r}\times\C^{d\times r}\times \C^{r}$ satisfying $\innerp{A'_j,Q'}=0, j=1,\ldots,m$. It is because that for each $j$ the equation
$\innerp{A'_j,Q'}=0$ is a polynomial which only contain variables $(U'_j,\mathbf{c}'_j,V'_j,\mathbf{d}'_j)$. Thus we have $\pi_2(\mathcal{G}'_{m,d,r})=\mathcal{L}_d$. Note that $\mathcal{L}_d \subset \C^{(d+1)\times (d+1)}$ is an affine variety with dimension $4d-2$ and hence $\dimm(\pi_2(\mathcal{G}'_{m,d,r}))=4d-2$.

We next consider the dimension of the preimage $\pi_2^{-1}(Q_0)$ for a fixed nonzero $Q_0\in \mathcal{L}_d$. For each pair $(U_j,\mathbf{c}_j,V_j,\mathbf{d}_j)$ , the equation $\innerp{A_j,Q_0}=0$ defines a hypersurface of dimension $2dr+2r-1$ in $\C^{d\times r}\times \C^{r}\times\C^{d\times r}\times \C^{r}$. Hence, the preimage $\pi_2^{-1}(Q_0)$ has dimension $m(2dr+2r-1)$. Then, according to \cite[Cor.11.13]{harris2013algebraic}
\begin{eqnarray*}
  \dimm(\mathcal{G}_{m,d,r})=\dimm(\mathcal{G}'_{m,d,r}) &=& \dimm(\pi_2(\mathcal{G}'_{m,d,r}))+\dimm(\pi_2^{-1}(Q_0)) \\
   &=&  m(2dr+2r-1)+4d-2.
\end{eqnarray*}
\end{proof}

\begin{proof}[Proof of Theorem \ref{generic measurements in C} ]
For each $(M_j,\vb_j)\in \C^{d\times r} \times \C^r$, we use $U_j,V_j$ and $\mathbf{c}_j,\mathbf{d}_j$ to denote the real and imaginary part of $M_j$ and $\vb_j$, respectively.
By Lemma \ref{equ:generic C}, a tuple of real matrices $\{(U_j,\mathbf{c}_j,V_j,\mathbf{d}_j)\}_{j=1}^m$ for which the corresponding $\{(M_j,\vb_j)\}_{j=1}^m$ does not have generalized affine phase retrieval property gives a point $\{(U_j,\mathbf{c}_j,V_j,\mathbf{d}_j)\}_{j=1}^m\in \pi_1((\mathcal{G}_{m,d,r})_{\R})\subset (\pi_1(\mathcal{G}_{m,d,r}))_{\R}$. A simple observation is that, if   $m\ge 4d-1$, then
\begin{equation*}
   \dimm(\pi_1(\mathcal{G}_{m,d,r}))\le \dimm(\mathcal{G}_{m,d,r})=m(2dr+2r-1)+4d-2<m(2dr+2r).
\end{equation*}
Hence,
\begin{equation*}
  \dimm_\R((\pi_1(\mathcal{G}_{m,d,r}))_\R)\le \dimm(\pi_1(\mathcal{G}_{m,d,r}))<m(2dr+2r)=\dimm(\R^{d\times r}\times \R^r\times \R^{d\times r}\times \R^r).
\end{equation*}
This implies that the set
\begin{equation*}
  \{(M_j,\vb_j)_{j=1}^m \in \C^{d\times r}\times \C^{r}:(M_j,\vb_j)_{j=1}^m \text{ does not have generalized affine phase retrieval property}\}
\end{equation*}
corresponds to a set $\{(U_j,\mathbf{c}_j,V_j,\mathbf{d}_j)\}_{j=1}^m$ which lies in a sub-manifold of $\R^{d\times r}\times \R^r\times \R^{d\times r}\times \R^r$. Hence, we arrive at conclusion.
\end{proof}


\begin{thebibliography}{10}
\bibitem{balan2006signal}
Radu Balan, Pete Casazza, and Dan Edidin.
\newblock On signal reconstruction without phase.
\newblock {\em Applied and Computational Harmonic Analysis}, 20(3):345--356,
  2006.

\bibitem{planephsae}
Sara Botelho-Andrade, Peter~G Casazza, Desai Cheng, John Haas, Tin~T Tran,
  Janet~C Tremain, and Zhiqiang Xu.
\newblock Phase retrieval by hyperplanes.
\newblock {\em Contemporary Mathematics}, Volume 706, 2018.

\bibitem{projphase}
Jameson Cahill, Peter~G Casazza, Jesse Peterson, and Lindsey Woodland.
\newblock Phase retrieval by projections.
\newblock   {\em Houston J. Math. }, 42(2):  537-558, 2016.

\bibitem{phase1}
Emmanuel~J Candes, Xiaodong Li, and Mahdi Soltanolkotabi.
\newblock Phase retrieval via wirtinger flow: Theory and algorithms.
\newblock {\em IEEE Transactions on Information Theory}, 61(4):1985--2007,
  2015.

\bibitem{phase2}
Emmanuel~J Candes, Thomas Strohmer, and Vladislav Voroninski.
\newblock Phaselift: Exact and stable signal recovery from magnitude
  measurements via convex programming.
\newblock {\em Communications on Pure and Applied Mathematics},
  66(8):1241--1274, 2013.

\bibitem{conca2015algebraic}
Aldo Conca, Dan Edidin, Milena Hering, and Cynthia Vinzant.
\newblock An algebraic characterization of injectivity in phase retrieval.
\newblock {\em Applied and Computational Harmonic Analysis}, 38(2):346--356,
  2015.

\bibitem{Dan}
Dan Edidin.
\newblock Projections and phase retrieval.
\newblock {\em Applied and Computational Harmonic Analysis}, 42(2):350--359,
  2017.

\bibitem{fienup1987phase}
C.~Fienup and J.~Dainty.
\newblock Phase retrieval and image reconstruction for astronomy.
\newblock {\em Image Recovery: Theory and Application}, pages 231--275, 1987.

\bibitem{gao2016phase}
Bing Gao, Qiyu Sun, Yang Wang, and Zhiqiang Xu.
\newblock Phase retrieval from the magnitudes of affine linear measurements.
\newblock {\em Advances in Applied Mathematics}, 93:121--141, 2018.

\bibitem{phase3}
Bing Gao and Zhiqiang Xu.
\newblock Phaseless recovery using the gauss--newton method.
\newblock {\em IEEE Transactions on Signal Processing}, 65(22):5885--5896,
  2017.

\bibitem{gerchberg1972practical}
Ralph~W Gerchberg.
\newblock A practical algorithm for the determination of phase from image and
  diffraction plane pictures.
\newblock {\em Optik}, 35:237, 1972.

\bibitem{harris2013algebraic}
Joe Harris.
\newblock {\em Algebraic geometry: a first course}, volume 133.
\newblock Springer Science \& Business Media, 2013.

\bibitem{harrison1993phase}
Robert~W Harrison.
\newblock Phase problem in crystallography.
\newblock {\em JOSA A}, 10(5):1046--1055, 1993.

\bibitem{liebling2003local}
Michael Liebling, Thierry Blu, Etienne Cuche, Pierre Marquet, Christian
  Depeursinge, and Michael Unser.
\newblock Local amplitude and phase retrieval method for digital holography
  applied to microscopy.
\newblock In {\em European Conference on Biomedical Optics}, page 5143\_210.
  Optical Society of America, 2003.

\bibitem{miao2008extending}
Jianwei Miao, Tetsuya Ishikawa, Qun Shen, and Thomas Earnest.
\newblock Extending x-ray crystallography to allow the imaging of
  noncrystalline materials, cells, and single protein complexes.
\newblock {\em Annu. Rev. Phys. Chem.}, 59:387--410, 2008.

\bibitem{millane1990phase}
Rick~P Millane.
\newblock Phase retrieval in crystallography and optics.
\newblock {\em JOSA A}, 7(3):394--411, 1990.

\bibitem{shechtman2015phase}
Yoav Shechtman, Yonina~C Eldar, Oren Cohen, Henry~Nicholas Chapman, Jianwei
  Miao, and Mordechai Segev.
\newblock Phase retrieval with application to optical imaging: a contemporary
  overview.
\newblock {\em IEEE signal processing magazine}, 32(3):87--109, 2015.

\bibitem{vinzant2015small}
Cynthia Vinzant.
\newblock A small frame and a certificate of its injectivity.
\newblock In {\em Sampling Theory and Applications (SampTA), 2015 International
  Conference on}, pages 197--200. IEEE, 2015.

\bibitem{walther1963question}
Adriaan Walther.
\newblock The question of phase retrieval in optics.
\newblock {\em Journal of Modern Optics}, 10(1):41--49, 1963.

\bibitem{wang2016generalized}
Yang Wang and Zhiqiang Xu.
\newblock Generalized phase retrieval: measurement number, matrix recovery and
  beyond.
\newblock {\em Applied and Computational Harmonic Analysis}, 2017.

\bibitem{YW}
Ziyang Yuan and Hongxia Wang.
\newblock Phase retrieval with background information.
\newblock {\em arXiv preprint arXiv:1802.01256}, 2018.

\end{thebibliography}
\end{document}